\newtheorem{defn}{Definition}[section]
\newtheorem{lem}{Lemma}[section]
\newtheorem{remark}{Remark}[section]
\newtheorem{thm}{Theorem}[section]
\newtheorem{prop}{Proposition}[section]
\numberwithin{equation}{section}
\numberwithin{figure}{section}
\newcommand{\beas}{\begin{eqnarray*}}
\newcommand{\eeas}{\end{eqnarray*}}
\newcommand{\bea}{\begin{eqnarray}}
\newcommand{\eea}{\end{eqnarray}}
\newcommand{\ben}{\begin{enumerate}}
\newcommand{\een}{\end{enumerate}}
\newcommand{\bi}{\begin{itemize}}
\newcommand{\ei}{\end{itemize}}
\newcommand{\beq}{\begin{equation}}
\newcommand{\eeq}{\end{equation}}
\newcommand{\E}{\mathbb{E}}
\newcommand{\df}{\mathrm{d}}
\date{May 11, 2021}
\begin{document}

\title{\bf A rough SABR formula}

\author{ 
Masaaki Fukasawa\\
 {\small Graduate School of Engineering Science, Osaka University}\\
 {\small 1-3 Machikaneyama, Toyonaka, Osaka, JAPAN}\\
 {\small fukasawa@sigmath.es.osaka-u.ac.jp }\\~\\
 
Jim Gatheral\\{\small Baruch College, City University of New York}\\
{\small jim.gatheral@baruch.cuny.edu}\\
 
}

\maketitle\thispagestyle{empty}

\begin{abstract}
Following an approach originally suggested by Balland in the context of
 the SABR model, we derive an ODE that is satisfied by 
 normalized volatility smiles for short maturities under a rough volatility
 extension of the SABR model that extends also the rough Bergomi model.  We solve
 this ODE numerically and further present a very accurate approximation
 to the numerical solution that we dub the {\em rough SABR formula}.  
\end{abstract}

\section{Introduction}
The now twenty years old SABR (stochastic alpha-beta-rho) model remains
very popular amongst practitioners, particularly those in foreign
exchange and interest rate derivative markets.  One key reason for the
popularity of the SABR model is the existence of a 
closed-form approximation to the implied volatility smile, the
celebrated SABR formula of Hagan et al. \cite{hagan2002managing}.
SABR implied volatility surfaces are, however, not really
consistent with market data. In particular, 
the SABR implied volatility surface and its approximation by the SABR formula
cannot reproduce the power-law type term structure of at-the-money (ATM) skew
typically observed in markets.
As a result, practioners are forced to use different SABR parameters for
different maturities.
 Recently, a class of stochastic volatility models where volatility is not a
semimartingale but has a rougher path, so called {\em rough volatility} models,
have been shown to generate better fits to the observed volatility surface with very few parameters;
see~\cite{alos2007short,bayer2016pricing,fukasawa2017short,forde2017asymptotics}
for more details.  In particular, the  rough Bergomi model of
\cite{bayer2016pricing} has only three parameters.  
In this article, we introduce a rough SABR model that includes both the
SABR model and the rough Bergomi model as particular cases,
and extend the SABR formula to the rough SABR model,
demonstrating the accuracy of our formula using numerical simulation.

The SABR formula is a small-time asymptotic approximation, and 
there are several approaches to its derivation, including~\cite{berestycki2004computing,osajima2007asymptotic}.
Balland's derivation~\cite{balland2006forward} of the lognormal SABR formula 
is a particularly simple and elegant one.  
Balland's idea is to start with the drift condition for arbitrage-free implied
volatility processes, 
and then to derive an ordinary differential equation (ODE)
to be satisfied by normalized implied volatility smiles.
In this article, we apply this idea to the rough SABR model
to derive a modified ODE, which we solve numerically in the general case.  We also provide
an accurate closed-form approximation to this numerical solution, which
we dub the {\em rough SABR formula}.
In contrast to the classical SABR formula,
the rough SABR formula generates a reasonable shape for the entire volatility surface, not just a single smile.

\section{No arbitrage dynamics of the implied volatility}
As a preliminary step, here we describe the drift condition for
arbitrage-free dynamics of implied volatility processes.
We treat both the Black-Scholes and Bachelier (normal) implied volatilities. 
We follow Balland~\cite{balland2006forward} for the Black-Scholes case, 
and apply the same idea to the Bachelier case.
We also introduce the notion of asymptotically arbitrage-free
approximation of the implied volatility.

Let $S = \{S_t\}$ be the underlying asset price process of an option market, and $C=\{C_t(K,T)\}$,
$C_t(K,T) = P^{\mathrm{BS}}(S_t,K,T-t,\Sigma^{\mathrm{BS}}_t) = P^{\mathrm{B}}(S_t,K,T-t,\Sigma^{\mathrm{B}}_t) $ be a call option price process
with
strike $K$ and maturity $T$, where $P^{\mathrm{BS}}(S,K,\tau,\sigma)$
and
$ P^{\mathrm{B}}(S,K,\tau,\sigma)$ are respectively the
Black-Scholes and Bachelier call prices with volatility $\sigma$
\begin{equation*}
\begin{split}
& P^{\mathrm{BS}}(S,K,\tau,\sigma) = S\Phi(d_+) - K\Phi(d_-), \ \ d_\pm = 
\frac{\log \frac{S}{K}}{\sigma\sqrt{\tau}} \pm \frac{\sigma \sqrt{\tau}}{2},\\
& P^{\mathrm{B}}(S,K,\tau,\sigma) = \sigma\sqrt{\tau}\phi\left(
\frac{K-S_0}{\sigma\sqrt{\tau}} \right) - (K-S_0)\left(1-\Phi\left(\frac{K-S_0}{\sigma\sqrt{\tau}}\right)\right)
\end{split}
\end{equation*}
We fix $K>0$ and $T>0$, and study the Black-Scholes and Bachelier implied volatility processes
$\Sigma^{\mathrm{BS}} = \{\Sigma^{\mathrm{BS}}_t\}$ and
$\Sigma^{\mathrm{B}} = \{\Sigma^{\mathrm{B}}_t\}$.

\begin{prop}\label{prop1}
Assume that  $S$ and $C$ are continuous It\^o processes,
 and that there is an equivalent local martingale measure $Q$ for
$S$ and $C$.
Then, $\Sigma := \Sigma^{\mathrm{BS}}$ is a continuous It\^o process and, 
denoting by $D\,\mathrm{d}t$ the drift part of
 $\mathrm{d}\Sigma$ under $Q$, we have
\begin{equation}\label{na}
\begin{split}
 &   \frac{\mathrm{d}}{\mathrm{d}t}
 \langle \log S \rangle
+ 2 k
\frac{\mathrm{d}}{\mathrm{d}t}
\langle \log S, \log \Sigma \rangle
+ k^2
\frac{\mathrm{d}}{\mathrm{d}t}
\langle \log \Sigma \rangle
\\ &= \Sigma^2 - 2\Sigma\tau D
-\Sigma^2\tau
\frac{\mathrm{d}}{\mathrm{d}t}
\langle \log S, \log \Sigma \rangle
+ \frac{\Sigma^4\tau^2}{4}\frac{\mathrm{d}}{\mathrm{d}t}
\langle \log \Sigma \rangle,
\end{split}
\end{equation}
where $\tau = T-t$ and  $k = \log K/S$.
\end{prop}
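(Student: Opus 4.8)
The plan is to use two structural facts: that the Black--Scholes price $P^{\mathrm{BS}}(S,K,\tau,\sigma)$ is smooth and strictly increasing in $\sigma$ (its Vega is strictly positive), so the relation $C_t=P^{\mathrm{BS}}(S_t,K,T-t,\Sigma_t)$ can be inverted for $\Sigma_t$; and that under $Q$ the price $C$ is a local martingale, so the drift part of $\mathrm{d}C$ must vanish. First I would establish that $\Sigma$ is a continuous It\^o process. For fixed $K>0$ and $\tau>0$, the map $\sigma\mapsto P^{\mathrm{BS}}(S,K,\tau,\sigma)$ is a smooth strictly increasing bijection onto $((S-K)^+,S)$, so there is a smooth inverse $I$ with $\Sigma_t=I(S_t,C_t,T-t)$ for $t<T$. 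Since $(S,C)$ is a continuous It\^o process and $t\mapsto T-t$ is $C^1$ of finite variation, It\^o's formula applied to $I$ shows $\Sigma$ is a continuous It\^o process on $[0,T)$; this is what gives meaning to the drift $D\,\mathrm{d}t$ and to the brackets in \eqref{na}.

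Next I would apply It\^o's formula to $C_t=P^{\mathrm{BS}}(S_t,K,T-t,\Sigma_t)$ and collect the $\mathrm{d}t$ terms. Because $S$ is a $Q$-local martingale the delta term $\partial_S P^{\mathrm{BS}}\,\mathrm{d}S_t$ contributes no drift, $\mathrm{d}\Sigma_t$ contributes $D\,\mathrm{d}t$, and the explicit time dependence contributes $-\partial_\tau P^{\mathrm{BS}}$; since $C$ is a $Q$-local martingale that is also a continuous It\^o process, its drift vanishes, giving
\begin{equation*}
\begin{split}
\mathcal{V}\,D-\partial_\tau P^{\mathrm{BS}} &+\tfrac12\partial_{SS}P^{\mathrm{BS}}\frac{\mathrm{d}}{\mathrm{d}t}\langle S\rangle+\partial_{S\sigma}P^{\mathrm{BS}}\frac{\mathrm{d}}{\mathrm{d}t}\langle S,\Sigma\rangle\\ &+\tfrac12\partial_{\sigma\sigma}P^{\mathrm{BS}}\frac{\mathrm{d}}{\mathrm{d}t}\langle\Sigma\rangle=0,
\end{split}
\end{equation*}
where $\mathcal{V}:=\partial_\sigma P^{\mathrm{BS}}$. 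I would then eliminate the decay term with the (zero-rate) Black--Scholes PDE $\partial_\tau P^{\mathrm{BS}}=\tfrac12\sigma^2S^2\partial_{SS}P^{\mathrm{BS}}$, which holds because $S$ is a $Q$-martingale.

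The key to finishing is that every second-order Greek is proportional to Vega. Evaluated at $\sigma=\Sigma_t$,
\begin{equation*}
\partial_{SS}P^{\mathrm{BS}}=\frac{\mathcal{V}}{S^2\Sigma\tau},\qquad \partial_{S\sigma}P^{\mathrm{BS}}=-\frac{d_-}{S\Sigma\sqrt{\tau}}\,\mathcal{V},\qquad \partial_{\sigma\sigma}P^{\mathrm{BS}}=\frac{d_+d_-}{\Sigma}\,\mathcal{V},
\end{equation*}
with $\mathcal{V}=S\sqrt{\tau}\,\phi(d_+)>0$. Dividing the drift equation by $\mathcal{V}$ removes the normal density and leaves an identity purely in the quadratic-variation rates. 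Converting $\langle S\rangle,\langle S,\Sigma\rangle,\langle\Sigma\rangle$ into the logarithmic brackets via $\mathrm{d}\langle S\rangle=S^2\,\mathrm{d}\langle\log S\rangle$, $\mathrm{d}\langle S,\Sigma\rangle=S\Sigma\,\mathrm{d}\langle\log S,\log\Sigma\rangle$, $\mathrm{d}\langle\Sigma\rangle=\Sigma^2\,\mathrm{d}\langle\log\Sigma\rangle$, and substituting $d_\pm=-k/(\Sigma\sqrt{\tau})\pm\Sigma\sqrt{\tau}/2$ together with $d_+d_-=k^2/(\Sigma^2\tau)-\Sigma^2\tau/4$, the coefficients collapse to the exact constants appearing in \eqref{na}; multiplying through by $2\Sigma\tau$ and collecting terms yields the claimed identity.

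I expect the Greek bookkeeping and the final substitution to be routine but error-prone, so the genuinely delicate step is the first one: justifying rigorously that $\Sigma$ is a continuous It\^o process with a well-defined absolutely continuous drift, and in particular controlling the inversion as $t\uparrow T$, since $\tau\downarrow 0$ forces $\mathcal{V}\downarrow 0$ and the smooth inverse $I$ degenerates there. On $[0,T)$ the strict positivity of Vega keeps $I$ smooth and the argument clean, so I would carry out the whole derivation on $[0,T)$ and treat the terminal behaviour separately.
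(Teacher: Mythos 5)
Your proposal is correct and follows essentially the same route as the paper: apply It\^o's formula to $C_t=P^{\mathrm{BS}}(S_t,K,T-t,\Sigma_t)$, kill the drift using the local martingale property under $Q$, substitute the Greeks (all proportional to Vega), and rewrite in terms of logarithmic brackets — your "divide by Vega, multiply by $2\Sigma\tau$" bookkeeping reproduces the paper's displayed intermediate identity exactly. The only additions are that you justify the It\^o-process property of $\Sigma$ by smooth inversion (which the paper leaves implicit) and that you invoke the Black--Scholes PDE to eliminate the theta term, which is a deterministic identity of the function $P^{\mathrm{BS}}$ rather than a consequence of $S$ being a $Q$-martingale, but is equivalent to the paper's direct substitution.
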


\begin{proof}
By It\^o's formula,
\begin{equation*}
\mathrm{d}C-  P_S \mathrm{d}S  =
- P_\tau\mathrm{d}t  + P_\sigma\mathrm{d}\Sigma
+ \frac{1}{2}P_{SS}\mathrm{d}\langle S\rangle 
+ P_{S\sigma}\mathrm{d}\langle S, \Sigma \rangle +
\frac{1}{2}P_{\sigma\sigma}\mathrm{d}\langle \Sigma \rangle
\end{equation*}
and this is (the differential of) a local martingale under $Q$, where the greeks are
\begin{equation*}
\begin{split}
& P_S= \Phi(d_+),\ \ 
 P_\tau = S\phi(d_+)\frac{\sigma}{2\sqrt{\tau}},\ \
P_{\sigma} = S\phi(d_+)\sqrt{\tau},\\
&P_{SS} = \frac{\phi(d_+)}{S\sigma \sqrt{\tau}},\ \
P_{S\sigma} = - \frac{\phi(d_+)d_-}{\sigma},\ \
P_{\sigma\sigma} = \frac{S\sqrt{\tau}}{\sigma}d_+d_-\phi(d_+).
\end{split}
\end{equation*}
Substituting these, we obtain
\begin{equation*}
\begin{split}
 -\frac{\Sigma^2}{2}\mathrm{d}t
+ \Sigma \tau D \mathrm{d}t
&+ \frac{1}{2}\mathrm{d}\langle \log S \rangle - 
\left(\log\frac{S}{K} - \frac{\Sigma^2\tau}{2}\right)
\mathrm{d}\langle \log S, \log \Sigma \rangle\\
&+ \frac{1}{2}\left(
\left|\log \frac{S}{K}\right|^2 - \frac{\Sigma^4\tau^2}{4}\right)\mathrm{d}
\langle \log \Sigma \rangle = 0.
\end{split}
\end{equation*}

\end{proof}

In light of Proposition~\ref{prop1},
we introduce the following notion.
\begin{defn}
 A continuous It\^o process $\hat{\Sigma} = \{\hat\Sigma_t\}$ is said to
 be an asymptotically arbitrage-free approximation of $\Sigma^\mathrm{BS}$ under $Q$ if,
denoting by $\hat{D}\,\mathrm{d}t$ the drift part of
 $\mathrm{d}\hat\Sigma$ under $Q$, there exist
a continuous function
 $\varphi$ on $\mathbb{R}$
and a continuous process $\Psi = \{\Psi_t\}$ on $[0,T]$ such that
\begin{equation}\label{ana}
\begin{split}
 & \Biggl|  \frac{\mathrm{d}}{\mathrm{d}t}
 \langle \log S \rangle
+ 2 k
\frac{\mathrm{d}}{\mathrm{d}t}
\langle \log S, \log \hat\Sigma \rangle
+ k^2
\frac{\mathrm{d}}{\mathrm{d}t}
\langle \log \hat\Sigma \rangle
\\ &- \hat\Sigma^2 + 2\hat\Sigma\tau \hat{D}
+\hat\Sigma^2\tau
\frac{\mathrm{d}}{\mathrm{d}t}
\langle \log S, \log \hat\Sigma \rangle
- \frac{\hat\Sigma^4\tau^2}{4}\frac{\mathrm{d}}{\mathrm{d}t}
\langle \log \hat\Sigma \rangle \Biggr| \leq \varphi(\Psi \hat{\Sigma})\cdot
 o_p(1)
\end{split}
\end{equation}
as $\tau = T-t \to 0$, where $k = \log K/S$ and 
$o_p(1)$ is a term which converges to $0$
 in probability.
\end{defn}

Now we give a Bachelier version.

\begin{prop}\label{prop2}
Assume that  $S$ and $C$ are continuous It\^o processes,
 and that there is an equivalent local martingale measure $Q$ for
$S$ and $C$.
Then, $\Sigma := \Sigma^{\mathrm{B}}$ is a continuous It\^o process and, 
denoting by $D\,\mathrm{d}t$ the drift part of
 $\mathrm{d}\Sigma$ under $Q$, we have
\begin{equation}\label{eq:na2}
 \frac{\mathrm{d}}{\mathrm{d}t}
 \langle S \rangle
+ 2 k
\frac{\mathrm{d}}{\mathrm{d}t}
\langle S, \log \Sigma \rangle
+ k^2
\frac{\mathrm{d}}{\mathrm{d}t}
\langle \log \Sigma \rangle = \Sigma^2 - 2\Sigma\tau D,
\end{equation}
where $\tau = T-t$ and  $k = K-S$.
\end{prop}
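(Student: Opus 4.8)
The plan is to follow the proof of Proposition~\ref{prop1} verbatim, simply replacing the Black--Scholes greeks by their Bachelier analogues and tracking the (cleaner) cancellations. As a preliminary I would note that $\Sigma := \Sigma^{\mathrm{B}}$ is indeed a continuous It\^o process: the Bachelier vega is strictly positive, so $\sigma \mapsto P^{\mathrm{B}}(S,K,\tau,\sigma)$ is a smooth strictly increasing bijection onto the no-arbitrage price interval, and $\Sigma$ is recovered from the continuous It\^o processes $S$ and $C$ by a smooth implicit inversion. This legitimizes writing the drift of $\mathrm{d}\Sigma$ as $D\,\mathrm{d}t$. Then, applying It\^o's formula to $C_t = P^{\mathrm{B}}(S_t,K,T-t,\Sigma_t)$ with $\tau = T-t$ (so $\mathrm{d}\tau = -\mathrm{d}t$),
\begin{equation*}
\mathrm{d}C - P_S\,\mathrm{d}S = -P_\tau\,\mathrm{d}t + P_\sigma\,\mathrm{d}\Sigma + \tfrac12 P_{SS}\,\mathrm{d}\langle S\rangle + P_{S\sigma}\,\mathrm{d}\langle S,\Sigma\rangle + \tfrac12 P_{\sigma\sigma}\,\mathrm{d}\langle\Sigma\rangle .
\end{equation*}
Since $S$ and $C$ are $Q$-local martingales, the left-hand side is a local martingale, so the drift on the right must vanish; collecting the $\mathrm{d}t$ terms (the drift of $P_\sigma\,\mathrm{d}\Sigma$ being $P_\sigma D$) yields
\begin{equation*}
-P_\tau + P_\sigma D + \tfrac12 P_{SS}\tfrac{\mathrm{d}}{\mathrm{d}t}\langle S\rangle + P_{S\sigma}\tfrac{\mathrm{d}}{\mathrm{d}t}\langle S,\Sigma\rangle + \tfrac12 P_{\sigma\sigma}\tfrac{\mathrm{d}}{\mathrm{d}t}\langle\Sigma\rangle = 0 .
\end{equation*}

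The computational core, which I expect to be the main (if routine) obstacle, is evaluating the six greeks at $\sigma = \Sigma$. Writing $d = (S-K)/(\Sigma\sqrt\tau)$ and using $\phi'(x) = -x\phi(x)$, the differentiations collapse to the pleasantly simple forms
\begin{gather*}
P_S = \Phi(d),\quad P_\tau = \frac{\Sigma\phi(d)}{2\sqrt\tau},\quad P_\sigma = \sqrt\tau\,\phi(d),\\
P_{SS} = \frac{\phi(d)}{\Sigma\sqrt\tau},\quad P_{S\sigma} = -\frac{d\,\phi(d)}{\Sigma},\quad P_{\sigma\sigma} = \frac{\sqrt\tau\,d^2\phi(d)}{\Sigma}.
\end{gather*}
The care here is that the apparent $(1+d^2)$ contributions to $P_\tau$ and $P_\sigma$ arise from differentiating both the $\sigma\sqrt\tau\,\phi(d)$ and the $(S-K)\Phi(d)$ pieces, and one must check these cancel down to the clean expressions above. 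Every greek is a common multiple of $\phi(d)$, so after substituting and dividing through by $\phi(d)$, then multiplying by $2\Sigma\sqrt\tau$, I obtain
\begin{equation*}
-\Sigma^2 + 2\Sigma\tau D + \tfrac{\mathrm{d}}{\mathrm{d}t}\langle S\rangle - 2d\sqrt\tau\,\tfrac{\mathrm{d}}{\mathrm{d}t}\langle S,\Sigma\rangle + \tau d^2\,\tfrac{\mathrm{d}}{\mathrm{d}t}\langle\Sigma\rangle = 0 .
\end{equation*}

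Finally I would substitute $d = -k/(\Sigma\sqrt\tau)$ with $k = K-S$, so that $d\sqrt\tau = -k/\Sigma$ and $\tau d^2 = k^2/\Sigma^2$, and convert the covariations of $\Sigma$ into those of $\log\Sigma$: since the martingale part of $\mathrm{d}\log\Sigma$ equals $\Sigma^{-1}$ times that of $\mathrm{d}\Sigma$, one has $\mathrm{d}\langle S,\Sigma\rangle = \Sigma\,\mathrm{d}\langle S,\log\Sigma\rangle$ and $\mathrm{d}\langle\Sigma\rangle = \Sigma^2\,\mathrm{d}\langle\log\Sigma\rangle$. The surplus factors of $\Sigma$ cancel exactly, and rearranging gives precisely \eqref{eq:na2}. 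It is worth remarking that, in contrast to the Black--Scholes case, no residual $\Sigma^2\tau$ or $\Sigma^4\tau^2$ terms survive here: the single Bachelier moneyness satisfies $\tau d^2 = k^2/\Sigma^2$ with no additive correction, whereas the lognormal product $d_+ d_-$ carries the extra $-\Sigma^2\tau/4$ term responsible for those corrections in \eqref{na}.
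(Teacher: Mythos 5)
Your proposal is correct and follows essentially the same route as the paper: apply It\^o's formula to $C_t = P^{\mathrm{B}}(S_t,K,T-t,\Sigma_t)$, impose the local-martingale condition on $\mathrm{d}C - P_S\,\mathrm{d}S$, and substitute the Bachelier greeks (yours, written in terms of $d=(S-K)/(\Sigma\sqrt{\tau})$, agree exactly with the paper's, written in terms of $k/(\sigma\sqrt{\tau})$). The only additions are your explicit justification that $\Sigma^{\mathrm{B}}$ is a continuous It\^o process via the strictly positive vega and the conversion $\mathrm{d}\langle\Sigma\rangle = \Sigma^2\,\mathrm{d}\langle\log\Sigma\rangle$, both of which the paper leaves implicit.
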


\begin{proof}
By It\^o's formula,
\begin{equation*}
\mathrm{d}C-  P_S \mathrm{d}S  =
- P_\tau\mathrm{d}t  + P_\sigma\mathrm{d}\Sigma
+ \frac{1}{2}P_{SS}\mathrm{d}\langle S\rangle 
+ P_{S\sigma}\mathrm{d}\langle S, \Sigma \rangle +
\frac{1}{2}P_{\sigma\sigma}\mathrm{d}\langle \Sigma \rangle
\end{equation*}
and this is (the differential of) a local martingale under $Q$, where
 the (Bachelier) greeks are
\begin{equation*}
\begin{split}
& P_S= 1-\Phi\left(\frac{k}{\sigma\sqrt{\tau}}\right), \ \ 
 P_\tau =
 \frac{\sigma}{2\sqrt{\tau}}\phi\left(\frac{k}{\sigma\sqrt{\tau}}\right),
 \ \ 
P_{\sigma} = \sqrt{\tau} \phi\left(\frac{k}{\sigma\sqrt{\tau}}\right),\\
&P_{SS} = \frac{1}{\sigma \sqrt{\tau}}\phi\left(\frac{k}{\sigma\sqrt{\tau}}\right),\ \
P_{S\sigma} =  \frac{k}{\sigma^2\sqrt{\tau}}\phi\left(\frac{k}{\sigma\sqrt{\tau}}\right),\ \
P_{\sigma\sigma} = \frac{k^2}{\sigma^3\sqrt{\tau}}\phi\left(\frac{k}{\sigma\sqrt{\tau}}\right).
\end{split}
\end{equation*}
Substituting these, we obtain Equation \eqref{eq:na2}.
\end{proof}

In light of Proposition~\ref{prop2},
we introduce the following notion.
\begin{defn}
 A continuous It\^o process $\hat{\Sigma} = \{\hat\Sigma_t\}$ is said to
 be an asymptotically arbitrage-free approximation of
 $\Sigma^\mathrm{B}$ under $Q$ if,
denoting by $\hat{D}\,\mathrm{d}t$ the drift part of
 $\mathrm{d}\hat\Sigma$ under $Q$, there exist
a continuous function
 $\varphi$ on $\mathbb{R}$
and a continuous process $\Psi = \{\Psi_t\}$ on $[0,T]$ such that
\begin{equation}\label{ana2}
 \Biggl|  \frac{\mathrm{d}}{\mathrm{d}t}
 \langle S \rangle
+ 2 k
\frac{\mathrm{d}}{\mathrm{d}t}
\langle  S, \log \hat\Sigma \rangle
+ k^2
\frac{\mathrm{d}}{\mathrm{d}t}
\langle \log \hat\Sigma \rangle - \hat\Sigma^2 + 2\hat\Sigma\tau \hat{D}
 \Biggr| \leq \varphi(\Psi \hat{\Sigma})\cdot
 o_p(1)
\end{equation}
as $\tau = T-t \to 0$, where $k=K-S$ and 
$o_p(1)$ is a term which converges to $0$
in probability.
\end{defn}

It should be noted that the condition for an approximation to be an asymptotically arbitrage-free is
necessary but not sufficient for the approximation to be
reasonable. In particular, an asymptotically arbitrage-free
approximation is not unique and is not necessarily accurate.
Still, under the lognormal SABR model
\begin{equation}\label{cSABR}
 \frac{\mathrm{d}S}{S} = \alpha \mathrm{d}Z, \ \ 
\frac{\mathrm{d}\alpha}{\alpha} =  \frac{\eta}{2} \mathrm{d}W, 
\end{equation}
where $(Z,W)$ is a 2-dim correlated Brownian motion 
with $\mathrm{d}\langle Z,W \rangle_t =
\rho \mathrm{d}t$, and $ \rho \in (-1,1) $ and $\eta> 0$ are constants,
 Balland~\cite{balland2006forward} found that
\begin{equation*}
 \hat{\Sigma} : = \alpha f(Y), \ \ Y  = \frac{\eta}{\alpha} \log \frac{K}{S}
\end{equation*}
is (in our terminology) an asymptotically arbitrage-free approximation of
$\Sigma^{\mathrm{BS}}$ if $f$ is a solution of the ODE
\begin{equation*}
 \left(1 - y \frac{f^\prime(y)}{f(y)}\right)^2
\left(1 + \rho y + \frac{y^2}{4}  \right)
= f(y)^2,
\end{equation*}
and that solving this ODE, the lognormal SABR formula of Hagan et
al. \cite{hagan2002managing} is obtained:
\begin{equation*}
 f(y) = \frac{y}{g(y)}, \ \ 
g(y) = -2 \log \frac{\sqrt{1 + \rho y + y^2/4}-\rho-y/2}{1-\rho}.
\end{equation*}
See also \cite{andreasen2013expanded} for a related work.
This simple and elegant approach to reach this accurate formula
motivates us to seek an asymptotically arbitrage-free approximation to
derive a useful formula under rough volatility models.

\section{Implied volatility under rough SABR}

Here we present our rough SABR model with an asymptotically
arbitrage-free approximation of the
implied volatility. The model is
\begin{equation}
\frac{\mathrm{d}S_t}{\beta(S_t)} =  \alpha_t\,\mathrm{d}Z_t, \ \
\frac{\mathrm{d}\xi_t(s)}{\xi_t(s)} = \kappa(s-t)\mathrm{d}W_t, \ \ t < s
\label{eq:roughSABR}
\end{equation}
under an equivalent martingale measure $Q$,
where $\alpha_t = \sqrt{\xi_t(t)}$, $\beta$ is a positive continuous function,
$(Z,W)$ is a 2-dim correlated $\{\mathcal{F}_t\}$-Brownian
motion with $\mathrm{d}\langle Z,W \rangle_t = \rho
\mathrm{d}t$, $\kappa(t) = \eta \sqrt{2H}t^{H-1/2}$, $\rho \in [-1,1]$, $\eta
> 0$ and $H \in (0,1/2]$. 
We assume $\{\xi_0(s)\}_{s \geq 0}$ to be a family of $\mathcal{F}_0$
measurable random variables and the curve $s \mapsto \xi_0(s)$ to be
continuous.

Note that we have an explicit expression
\begin{equation}\label{exex}
 \xi_t(s)  = \E^Q[\xi_s(s)|\mathcal{F}_t]
= \xi_0(s) \exp\left\{ \eta \sqrt{2H}
\int_0^t (s-u)^{H-1/2}\mathrm{d}W_u - \frac{1}{2}\eta^2 (s^{2H} -(s-t)^{2H})
\right\}
\end{equation}
for $0 \leq t \leq s$. The case $H = 1/2$ with
\begin{equation*}
 \xi_0(s) = \alpha_0^2 \exp\left\{\tfrac{1}{4}\eta^2 s\right\}
\end{equation*}
is the classical SABR
model; see (\ref{cSABR}) for the log normal case $(\beta(s) = s)$. 
When $\beta(s) = s$, this is the rough Bergomi model introduced in
\cite{bayer2016pricing}, and $s \mapsto \xi_t(s)$ is the forward
variance curve at time $t$:
\begin{equation*}
\int_t^s \xi_t(u)\mathrm{d}s = \int_t^s
 \E^Q\left[\alpha_u^2|\mathcal{F}_t\right]\mathrm{d}u
= \E^Q\left[
\int_t^s \mathrm{d}\langle \log S \rangle | \mathcal{F}_t
\right].
\end{equation*}
In general, $\xi$ can be determined from weighted variance swap rates:
\begin{equation*}
\int_t^s \xi_t(u)\mathrm{d}u = \int_t^s
 \E^Q\left[\alpha_u^2|\mathcal{F}_t\right]\mathrm{d}u 
= \E^Q\left[
\int_t^s \frac{S_u^2}{\beta(S_u)^2}\mathrm{d}\langle \log S \rangle_u | \mathcal{F}_t
\right].
\end{equation*}
See \cite{fukasawa2021hedging} for the infinite dimensional Markov
property of this model with application to hedging.

Extending Balland~\cite{balland2006forward}, we obtain the following result.
\begin{thm}\label{main}
 Let $f$ be a  solution of the ODE
\begin{equation}\label{odef2}
 \left(1 - y \frac{f^\prime(y)}{f(y)}\right)^2
\left(1 + 2\rho \frac{y}{2H+1} + \left(\frac{y}{2H+1}\right)^2  \right)
= f(y)^2\left(1- (1-2H)\frac{yf^\prime(y)}{f(y)}\right)
\end{equation}
with $f(0) = 1$. Let $\beta(s) =s$, that is, consider the rough Bergomi
 model. Then,
\begin{equation*}
 \hat\Sigma := U  f(Y)
\end{equation*}
is an asymptotically arbitrage-free approximation of $\Sigma^{\mathrm{BS}}$ under $Q$, where
\begin{equation*}
 U_t = \sqrt{\frac{1}{T-t}\int_t^T \xi_t(s) \mathrm{d}s}, \ \ 
 Y_t = \frac{\kappa(T-t)}{U_t}\log \frac{K}{S_t}.
\end{equation*}
\end{thm}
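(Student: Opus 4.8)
The plan is to compute the It\^o dynamics of $\hat\Sigma = Uf(Y)$ explicitly, read off the three quantities that enter the no-arbitrage condition \eqref{ana}, namely $\frac{\mathrm d}{\mathrm dt}\langle\log\hat\Sigma\rangle$, $\frac{\mathrm d}{\mathrm dt}\langle\log S,\log\hat\Sigma\rangle$ and the drift $\hat D$, substitute them into the left-hand side of \eqref{ana}, and show that its $O(1)$ part vanishes exactly because $f$ solves \eqref{odef2}, the remainder being $o_p(1)$. I would begin with $U$. Since $\xi_t(s)$ is a $Q$-martingale in $t$, differentiating $U_t^2 = \tau^{-1}\int_t^T\xi_t(s)\,\mathrm ds$ gives $\mathrm dU_t^2 = \tau^{-1}(U_t^2-\alpha_t^2)\,\mathrm dt + \bar\kappa_t U_t^2\,\mathrm dW_t$, where $\bar\kappa_t := (\tau U_t^2)^{-1}\int_t^T\xi_t(s)\kappa(s-t)\,\mathrm ds$. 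The decisive small-$\tau$ computation is that, by continuity of $s\mapsto\xi_t(s)$ and $\xi_t(t)=\alpha_t^2$, one has $\int_0^\tau\xi_t(t+v)\kappa(v)\,\mathrm dv = \alpha_t^2\,\eta\sqrt{2H}\,(H+\tfrac12)^{-1}\tau^{H+1/2}(1+o_p(1))$, so that $\bar\kappa_t = \frac{2\kappa(\tau)}{2H+1}(1+o_p(1))$ and the martingale part of $\mathrm d\log U_t$ is $\tfrac12\bar\kappa_t\,\mathrm dW_t \approx \frac{\kappa(\tau)}{2H+1}\,\mathrm dW_t$. This is the origin of the factor $1/(2H+1)$ in \eqref{odef2}.

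Next I would expand $Y_t = \kappa(\tau)\,k_t/U_t$ with $k_t=\log(K/S_t)$ and $\mathrm dk_t = -\alpha_t\,\mathrm dZ_t + \tfrac12\alpha_t^2\,\mathrm dt$. Collecting terms, the martingale part of $\mathrm dY_t$ is, to leading order, $-\kappa(\tau)\big(\mathrm dZ_t + \frac{Y_t}{2H+1}\,\mathrm dW_t\big)$. The key point for the drift is that the only contribution to the drift of $Y_t$ of order $\tau^{-1}$ comes from the time derivative of the deterministic factor $\kappa(\tau)$, namely $-\kappa'(\tau)k_t/U_t = \frac{1/2-H}{\tau}Y_t$; every other drift contribution, as well as the second-order terms, is $O(\tau^{2H-1})$, hence strictly less singular than $\tau^{-1}$ since $H>0$. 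Consequently the dominant part of the drift of $\hat\Sigma = Uf(Y)$ is $Uf'(Y)\cdot\frac{1/2-H}{\tau}Y$, so that $\tau\hat D = (\tfrac12-H)\,UYf'(Y)(1+o_p(1))$ remains $O(1)$ even though $\hat D$ and $\frac{\mathrm d}{\mathrm dt}\langle Y\rangle$ individually blow up like $\tau^{2H-1}$.

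Finally I would assemble the left-hand side of \eqref{ana}. Writing $A:=1-Yf'(Y)/f(Y)$, the martingale part of $\mathrm d\log\hat\Sigma$ is $\approx\kappa(\tau)\big(-\tfrac{f'}{f}\,\mathrm dZ + \tfrac{A}{2H+1}\,\mathrm dW\big)$, and since $k=UY/\kappa(\tau)$ each of $\frac{\mathrm d}{\mathrm dt}\langle\log S\rangle$, $2k\frac{\mathrm d}{\mathrm dt}\langle\log S,\log\hat\Sigma\rangle$, $k^2\frac{\mathrm d}{\mathrm dt}\langle\log\hat\Sigma\rangle$, $\hat\Sigma^2$ and $2\hat\Sigma\tau\hat D$ is $O(1)$, whereas the remaining two terms carry extra factors $\tau\kappa(\tau)=O(\tau^{H+1/2})$ and $\tau^2\kappa(\tau)^2=O(\tau^{2H+1})$ and are $o_p(1)$. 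Factoring $U^2$ out of the $O(1)$ part, the powers of $Y$ recombine into $A^2\big(1+2\rho\tfrac{Y}{2H+1}+(\tfrac{Y}{2H+1})^2\big) - f(Y)^2 + (1-2H)Yf(Y)f'(Y)$, which is identically zero precisely when $f$ solves \eqref{odef2}. The surviving remainder is a continuous function of $(Y,f(Y),f'(Y),\dots)$ times an $o_p(1)$ factor, which one bounds by $\varphi(\Psi\hat\Sigma)\,o_p(1)$ for a suitable $\varphi$ and $\Psi$, giving \eqref{ana}.

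The step I expect to be the main obstacle is the error control in the last paragraph: several of the quantities involved ($\hat D$, $\frac{\mathrm d}{\mathrm dt}\langle Y\rangle$, $k$) diverge or vanish as $\tau\to0$, so the cancellations must be organized so that the approximations $\xi_t(t+v)\approx\alpha_t^2$, $U_t\approx\alpha_t$ and $\bar\kappa_t\approx\frac{2\kappa(\tau)}{2H+1}$ hold in the $o_p(1)$ sense uniformly enough to be absorbed into $\varphi(\Psi\hat\Sigma)\cdot o_p(1)$. The continuity of the forward variance curve $s\mapsto\xi_t(s)$ is exactly what legitimizes these integral approximations.
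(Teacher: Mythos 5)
Your proposal follows essentially the same route as the paper's proof in Appendix~\ref{sec:31proof}: your $\bar\kappa_t$ is the paper's $\kappa(\tau)R_t$, your limits $\bar\kappa_t \to 2\kappa(\tau)/(2H+1)$ and $U_t/\alpha_t\to 1$ are exactly the paper's closing lemma, and your final cancellation $A^2\bigl(1+\tfrac{2\rho Y}{2H+1}+\tfrac{Y^2}{(2H+1)^2}\bigr)-f(Y)^2+(1-2H)Yf(Y)f'(Y)=0$ is precisely how the paper invokes the ODE. The only quibble is your intermediate claim that the remaining drift contributions to $Y$ are $O(\tau^{2H-1})$: the term $\tfrac{Y}{2\tau}\bigl(1-\alpha^2/U^2\bigr)$ is really $\tau^{-1}\cdot o_p(1)$ rather than a deterministic power of $\tau$, which the paper handles by carrying $1-\alpha^2/U^2$ explicitly into the $o_p(1)$ remainder --- but this does not alter the argument.
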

The proof of Theorem~\ref{main} is given in Appendix \ref{sec:31proof}.

\begin{remark}[Consistency with the asymptotic skew formula of \cite{euch2019short}]\upshape
Substituting a formal series expansion $f(x) = 1 + ax + bx^2/2 + \dots$ to
(\ref{odef2}),
we find
\begin{equation*}
 a = \frac{\rho}{2(H+1/2)(H+3/2)},
\end{equation*}
consistent with the asymptotic skew formula for the rough Bergomi model 
 we obtained
 in~\cite{euch2019short}.  That said, the beta function coefficient that
 appears in the curvature formula of~\cite{euch2019short} precludes
 consistency with the second order coefficient
\begin{equation*}
 b = \frac{(2H+3)^2-12(2H+1)\rho^2}{2(H+1)(2H+1)^2(2H+3)^2}.
\end{equation*} 
\end{remark}

Replacing the role of $\log S$ by $S$ in the proof of Theorem~\ref{main},
we obtain the following Bachelier version.

\begin{thm}\label{main2}
 Let $f$ be a  solution of the ODE (\ref{odef2}). 
Let $\beta(s) =1$. Then,
\begin{equation*}
 \hat\Sigma := U  f(Y)
\end{equation*}
is an asymptotically arbitrage-free approximation of $\Sigma^{\mathrm{B}}$ under $Q$, where
\begin{equation*}
 U_t = \sqrt{\frac{1}{T-t}\int_t^T \xi_t(s) \mathrm{d}s}, \ \ 
 Y_t = \kappa(T-t)\frac{K-S_t}{U_t}.
\end{equation*}
\end{thm}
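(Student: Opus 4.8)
The plan is to verify the defining inequality \eqref{ana2} directly for the candidate $\hat\Sigma = U f(Y)$, mirroring the proof of Theorem~\ref{main} but with the Black--Scholes drift condition \eqref{na} replaced by the Bachelier one \eqref{eq:na2} of Proposition~\ref{prop2}, and with the role of $\log S$ played by $S$. Concretely, I would compute the It\^o decomposition of $\hat\Sigma$ and of $\log\hat\Sigma$, read off the three instantaneous (co)variations $\frac{\mathrm{d}}{\mathrm{d}t}\langle S\rangle$, $\frac{\mathrm{d}}{\mathrm{d}t}\langle S,\log\hat\Sigma\rangle$, $\frac{\mathrm{d}}{\mathrm{d}t}\langle\log\hat\Sigma\rangle$ together with the drift $\hat D$ of $\hat\Sigma$, substitute them into the left-hand side of \eqref{ana2}, and show that the leading-order terms as $\tau\to 0$ cancel by virtue of the ODE \eqref{odef2}, the remainder being $o_p(1)$ after division by a suitable prefactor $\varphi(\Psi\hat\Sigma)$. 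The reason the substitution $\log S\to S$ is legitimate is that, with $\beta\equiv 1$, \eqref{eq:roughSABR} gives $\mathrm{d}S = \alpha\,\mathrm{d}Z$, which has exactly the same martingale part as $\mathrm{d}\log S = \alpha\,\mathrm{d}Z - \tfrac12\alpha^2\,\mathrm{d}t$ in the rough Bergomi case; hence every quadratic variation entering \eqref{ana2} is literally identical to the one computed in the proof of Theorem~\ref{main}.

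I would compute these from $U_t^2 = \tau^{-1}\int_t^T\xi_t(s)\,\mathrm{d}s$ and $Y_t = \kappa(\tau)(K-S_t)/U_t$, using \eqref{exex}. Differentiating $U_t^2$, its $W$-martingale coefficient is $\frac{1}{2U_t\tau}\int_t^T\xi_t(s)\kappa(s-t)\,\mathrm{d}s$, which by continuity of $s\mapsto\xi_t(s)$ equals $\frac{\kappa(\tau)}{2H+1}U_t(1+o_p(1))$, since $\int_0^\tau\kappa(u)\,\mathrm{d}u = \frac{2}{2H+1}\kappa(\tau)\tau$. Writing the martingale part of $\log\hat\Sigma$ as $A\,\mathrm{d}Z + B\,\mathrm{d}W$, one finds to leading order $A = -\kappa(\tau)\tfrac{f'(Y)}{f(Y)}$ and $B = \frac{\kappa(\tau)}{2H+1}\big(1 - Y\tfrac{f'(Y)}{f(Y)}\big)$. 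Since $k = K - S_t = Y U_t/\kappa(\tau)$, both $U_t + kA$ and $kB$ carry the common factor $U_t\big(1 - Y\tfrac{f'}{f}\big)$, and the left-hand side $\alpha^2 + 2k\alpha(A+\rho B) + k^2(A^2+2\rho AB+B^2)$ of \eqref{ana2} collapses to $U_t^2\big(1 - Y\tfrac{f'}{f}\big)^2\big(1 + 2\rho\tfrac{Y}{2H+1} + (\tfrac{Y}{2H+1})^2\big)$, i.e. $U_t^2$ times the left-hand side of \eqref{odef2}.

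For the right-hand side I would show that the only drift of $\hat\Sigma$ surviving at order $\tau^{-1}$ comes from the deterministic time dependence of $\kappa(\tau) = \eta\sqrt{2H}\,\tau^{H-1/2}$: since $\frac{\mathrm{d}}{\mathrm{d}t}\log\kappa(\tau) = \frac{1-2H}{2\tau}$, the process $Y$ acquires a drift $\frac{1-2H}{2\tau}Y\,\mathrm{d}t$, whence $2\hat\Sigma\tau\hat D = (1-2H)U_t^2 f(Y)f'(Y)Y(1+o_p(1))$ and $\hat\Sigma^2 - 2\hat\Sigma\tau\hat D = U_t^2 f(Y)^2\big(1 - (1-2H)Y\tfrac{f'}{f}\big)(1+o_p(1))$, i.e. $U_t^2$ times the right-hand side of \eqref{odef2}. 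All the remaining drift contributions are of strictly lower order: the $\tfrac12\alpha^2$ drift that distinguishes $\log S$ from $S$ is $O(\kappa(\tau))$, while the It\^o term $\tfrac12 Uf''\frac{\mathrm{d}}{\mathrm{d}t}\langle Y\rangle$ and the drift of $U$ are $O(\kappa(\tau)^2)=O(\tau^{2H-1})$; multiplied by $\tau$ these give $O(\tau^{H+1/2})$ and $O(\tau^{2H})$, both $o_p(1)$ for $H\in(0,1/2]$. This is exactly why the $\Sigma^2\tau$- and $\Sigma^4\tau^2$-terms present in \eqref{na} but absent in \eqref{eq:na2} do not alter the limiting equation, so the Bachelier and rough Bergomi cases share \eqref{odef2}. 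Invoking \eqref{odef2} with $f(0)=1$ cancels the $U_t^2$-scaled leading parts on both sides, and choosing $\varphi,\Psi$ to absorb the bounded-in-$\tau$ but $Y$-dependent prefactor leaves the required $\varphi(\Psi\hat\Sigma)\cdot o_p(1)$ bound.

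I expect the main obstacle to be the asymptotic bookkeeping rather than the algebra. One must establish, uniformly in the relevant near-the-money regime $Y = O_p(1)$ (equivalently $k\sim\tau^{1/2-H}$), both $\frac{1}{\tau}\int_t^T\xi_t(s)\kappa(s-t)\,\mathrm{d}s = \frac{2}{2H+1}\kappa(\tau)U_t^2(1+o_p(1))$ and $U_t^2-\alpha_t^2 = o_p(1)$ at the correct rate, using only continuity of the forward-variance curve and the explicit representation \eqref{exex}. The delicate point is that several individual terms blow up like powers of $\kappa(\tau)$ and cancel only after \eqref{odef2} is used, so the remainder must be controlled in probability and matched against the prefactor $\varphi(\Psi\hat\Sigma)$ rather than bounded uniformly.
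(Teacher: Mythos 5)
Your proposal is correct and follows essentially the same route as the paper, which establishes Theorem~\ref{main2} simply by rerunning the proof of Theorem~\ref{main} with $S$ in place of $\log S$ (the paper gives no separate argument beyond that one-line remark, and your leading-order identifications of the martingale coefficients, of $k=YU/\kappa(\tau)$, and of the surviving $\tfrac{1-2H}{2\tau}Y$ drift match the computations in Appendix~\ref{sec:31proof}). The only loose point is your claim that the drift of $U$ is $O(\kappa(\tau)^2)$ --- it also contains a $\tfrac{1}{2\tau}\left(1-\alpha^2/U^2\right)$ piece --- but since you separately require $U_t^2-\alpha_t^2=o_p(1)$, that contribution is absorbed into the $o_p(1)$ remainder exactly as in the paper's proof.
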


According to \cite{berestycki2002asymptotics}, for a local volatility model
$
\mathrm{d}S_t= \sigma(S_t)\,\mathrm{d}Z_t
$,
we have
\begin{equation*}
 \Sigma^{\mathrm{BS}} \approx \frac{\log \frac{K}{S}}{\int_S^K
  \frac{\mathrm{d}s}{\sigma(s)}}.
\end{equation*}
In particular for the Bachelier model $\sigma(s) = \sigma$, we have
\begin{equation*}
 \Sigma^{\mathrm{BS}} \approx \sigma \frac{\log \frac{K}{S}}{K-S}
\end{equation*}
that connects the Black-Scholes and Bachelier volatility parameters.
Combining this and Theorem~\ref{main2}, we obtain an approximation
formula for our model with $\beta(s) = 1$: 
\begin{equation*}
 \Sigma^{\mathrm{BS}}\approx
Uf(Y) \frac{\log \frac{K}{S}}{K-S}
= \frac{U}{K-S} f\left(
\kappa(\tau)\frac{K-S}{U}
\right) \log \frac{K}{S}.
\end{equation*}
This further suggests a formula for general $\beta$:
\begin{equation}\label{gen}
 \Sigma^{\mathrm{BS}}\approx
 \frac{U}{X} f\left(
\kappa(\tau)\frac{X}{U}
\right) \log \frac{K}{S}, \ \ X = \int_S^K \frac{\mathrm{d}s}{\beta(s)}.
\end{equation}

\section{Solving the ODE}
Now we study the solution of the ODE (\ref{odef2}).
For $g(y) = y/f(y)$, from (\ref{odef2}), we have
\begin{equation}\label{ode}
 g^\prime(y)^2\left(1 + 2\rho \frac{y}{2H+1} + \frac{y^2}{(2H+1)^2}\right)
= 1 - (1-2H)\left(1 - \frac{yg^\prime(y)}{g(y)}\right)
\end{equation}
with $g(y)/y \to 1$ as $y \to 0$.
When, $H = 1/2$, this is solvable and 
\begin{equation}
g(y) = -2 \log \frac{\sqrt{1 + \rho y + y^2/4}-\rho-y/2}{1-\rho}
 \label{eq:H12}
\end{equation}
which gives the familiar SABR formula. For $H=0$, we also have the explicit
solution
\begin{equation}
 g(y) = \frac{y}{|y|}
\sqrt{\log(1 + 2\rho y + y^2) +\frac{2\rho}{\sqrt{1-\rho^2}} \left(\arctan\frac{\rho}{\sqrt{1-\rho^2}} - \arctan\frac{y+\rho}{\sqrt{1-\rho^2}}\right)}.
\label{eq:H0}
\end{equation}
We plot the resulting solutions $f(y) = y/g(y)$ for various values of $\rho$ in Figure~\ref{fig:41}.
\begin{figure}[tbh!]
\includegraphics[width=\linewidth]{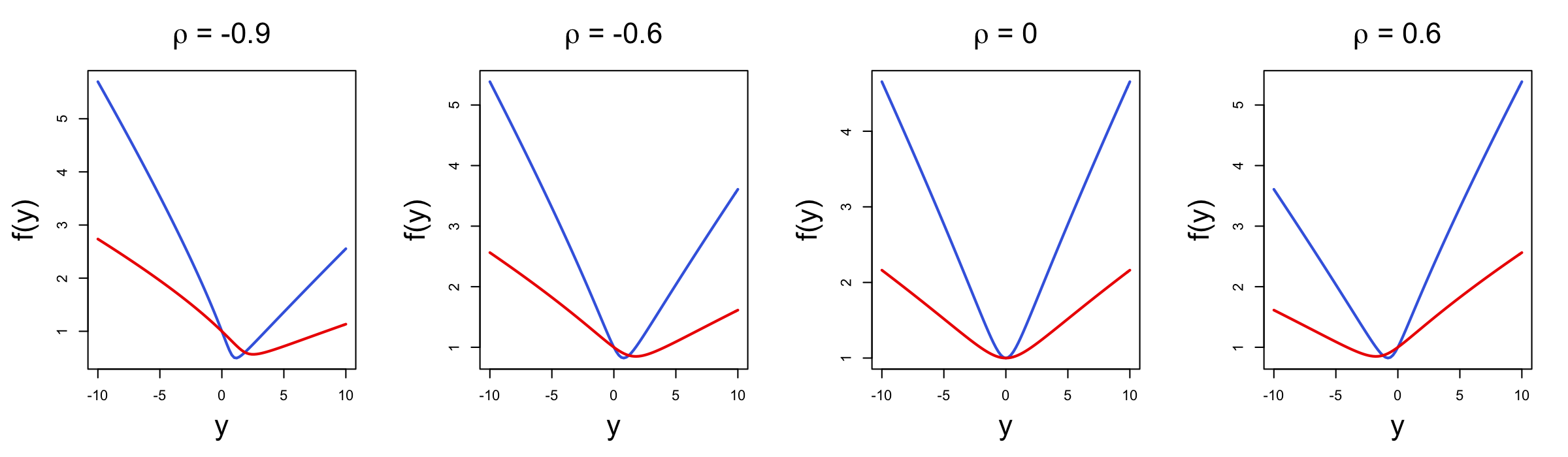}
\caption{The function $f$ for $H=1/2$ (in red) and $H=0$ (in blue).}\label{fig:41}
\end{figure}

For general $H$, the ODE has to be solved
numerically.
By (\ref{ode}), we have
\begin{equation*}
 g^\prime(y) = 
\frac{(1-2H)f(y)+ \sqrt{(1-2H)^2f(y)^2 + 8Hq(y)}}{2q(y)}, \ \ g(0) = 0
\end{equation*}
where
\begin{equation*}
f(y) = \frac{y}{g(y)}, \ \ f(0) = 1, \ \ 
 q(y) = 1 + 2\rho \frac{y}{2H+1} + \frac{y^2}{(2H+1)^2}.
\end{equation*}
There is no difficulty in obtaining a numerical solution for such a
one dimensional first order ODE.
See Figure~\ref{fig:42} for numerical examples of $f$ functions 
for various values of $H$ when $\rho
= 0$ (right) and $\rho = -0.9$ (left).

\begin{figure}
\includegraphics[width=\linewidth]{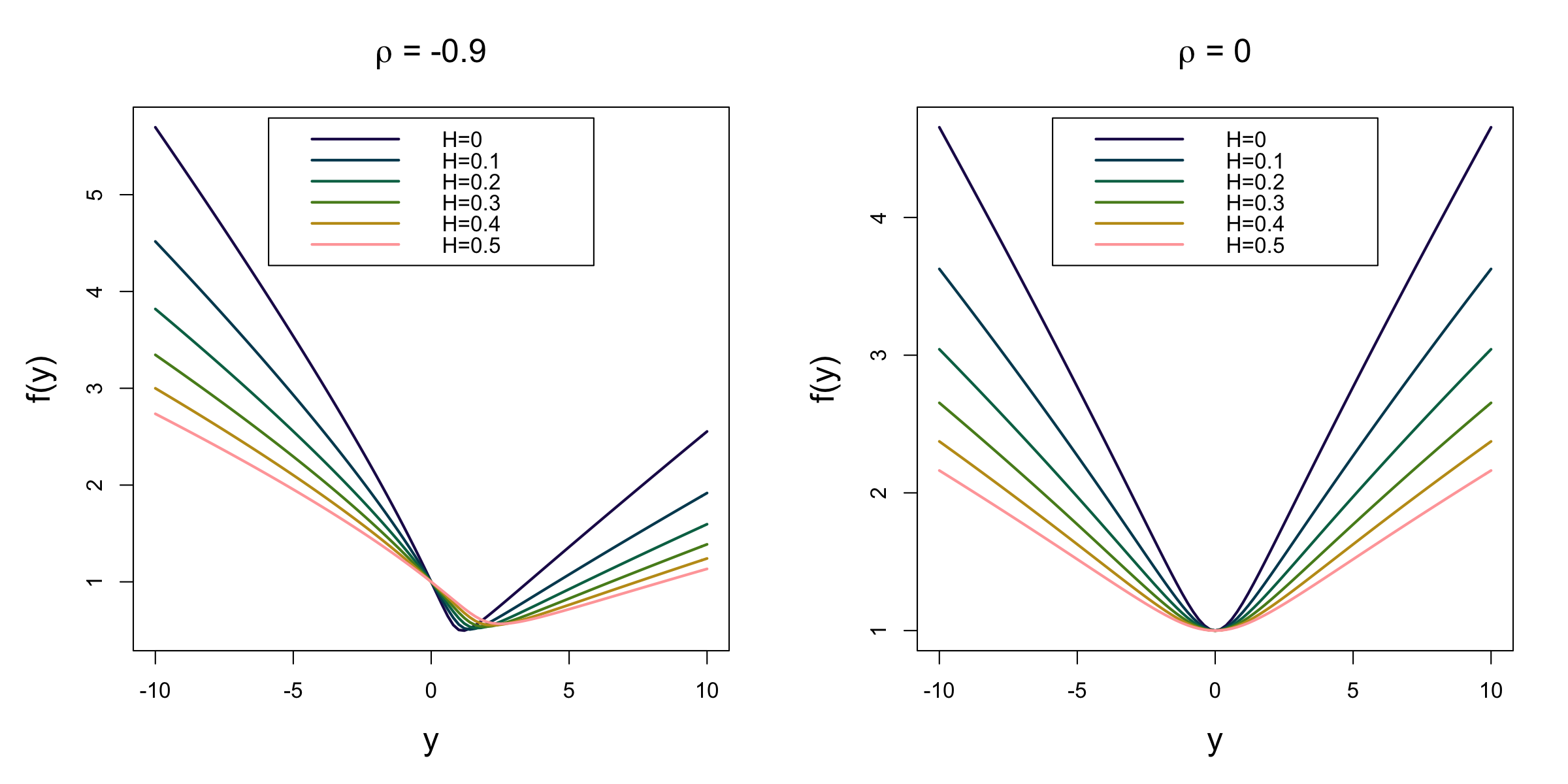}
\caption{The function $f$: numerical solutions for various values of $H$}\label{fig:42}
\end{figure}

  \section{A closed-form approximation}
  
In order to derive a closed-form approximation, it turns out to be convenient to recast \eqref{ode} in terms of $G(y) := g(y)^2$.  We find
\beq
\frac14\,G'(y)^2 \,\left\{\frac{y^2}{(2 H+1)^2} +\frac{2\,\rho\,y}{2 H+1} +1\right\} = (1-2 H)\,y\,\tfrac12 G'(y) + 2 H\,G(y).
\label{eq:bigG}
\eeq
with initial condition $G(0)=0$. Denote the solutions of \eqref{eq:bigG} with $H=0$ and $H=1/2$ respectively by $G_0(\cdot)$ and $G_{1/2}(\cdot)$.  Then from \eqref{eq:H12} and \eqref{eq:H0}, 
\beas
G_{1/2}(y) &=& 4\,\left( \log \frac{\sqrt{1 + \rho y + y^2/4}-\rho-y/2}{1-\rho} \right)^2\\
G_0(y) &=&\log(1 + 2\rho y + y^2) +\frac{2\rho}{\sqrt{1-\rho^2}} \left(\arctan\frac{\rho}{\sqrt{1-\rho^2}} - \arctan\frac{y+\rho}{\sqrt{1-\rho^2}}\right).\nonumber\\
\eeas

Substituting 
\begin{equation*}
 G(y) = y^2 + a\,y^3 + b\,y^4 + \dots
\end{equation*}
into \eqref{eq:bigG} and matching coefficients gives
\beas
 a = - \frac{\rho}{\gamma (\gamma+1)}, \ \ 
b= \frac{1}{{4 \gamma ^2\,(2\gamma+1)}}\left(3\,\rho^2\,\frac{ 4 \gamma +1 }{(\gamma +1)^2}-1\right).
\eeas
where $\gamma = H+\tfrac12$.
To the same order, we have 
\beas
G_{1/2}(y) &=&y^2-\frac{\rho  y^3}{2}+\frac{1}{48} \left(15 \rho ^2-4\right) y^4+\dots \\
G_0(y) &=&y^2 -\frac{4 \rho  y^3}{3}+  \frac{1}{2} \left(4 \rho ^2-1\right) y^4+ \dots \,.
\eeas
Matching coefficients of $y^2$ and $y^3$, we arrive at the following interpolation of the extreme solutions $G_0$ and $G_{1/2}$, which by construction gives the correct ATM skew:
\beq
 {G}_A(y) = (2H+1)^2
\left\{
\frac{3\,(1-2H)}{2H+3}\, G_0\left(\frac{y}{2H+1}\right)
+ \frac{2H}{2H+3}\,G_{1/2}\left(\frac{2y}{2H+1}\right)
\right\}.
\label{eq:GA}
\eeq

Obviously, the approximate solution $G_A(y)$ agrees with $G_0(y)$ when
$H=0$ and with $G_{1/2}(y)$ when $H = 1/2$.  To give a sense for the
accuracy of the approximate solution is for general $H$, in Figure
\ref{fig:51} we plot the numerical solution $f$ versus the approximation
$f_A(y) := |y|/\sqrt{G_A(y)}$ for two values of $H$: $H=0.05$ which is a
typical calibrated value, and $H=0.25$ which should approximately
maximize the approximation error.

\begin{figure}[tbh!]
\includegraphics[width=\linewidth]{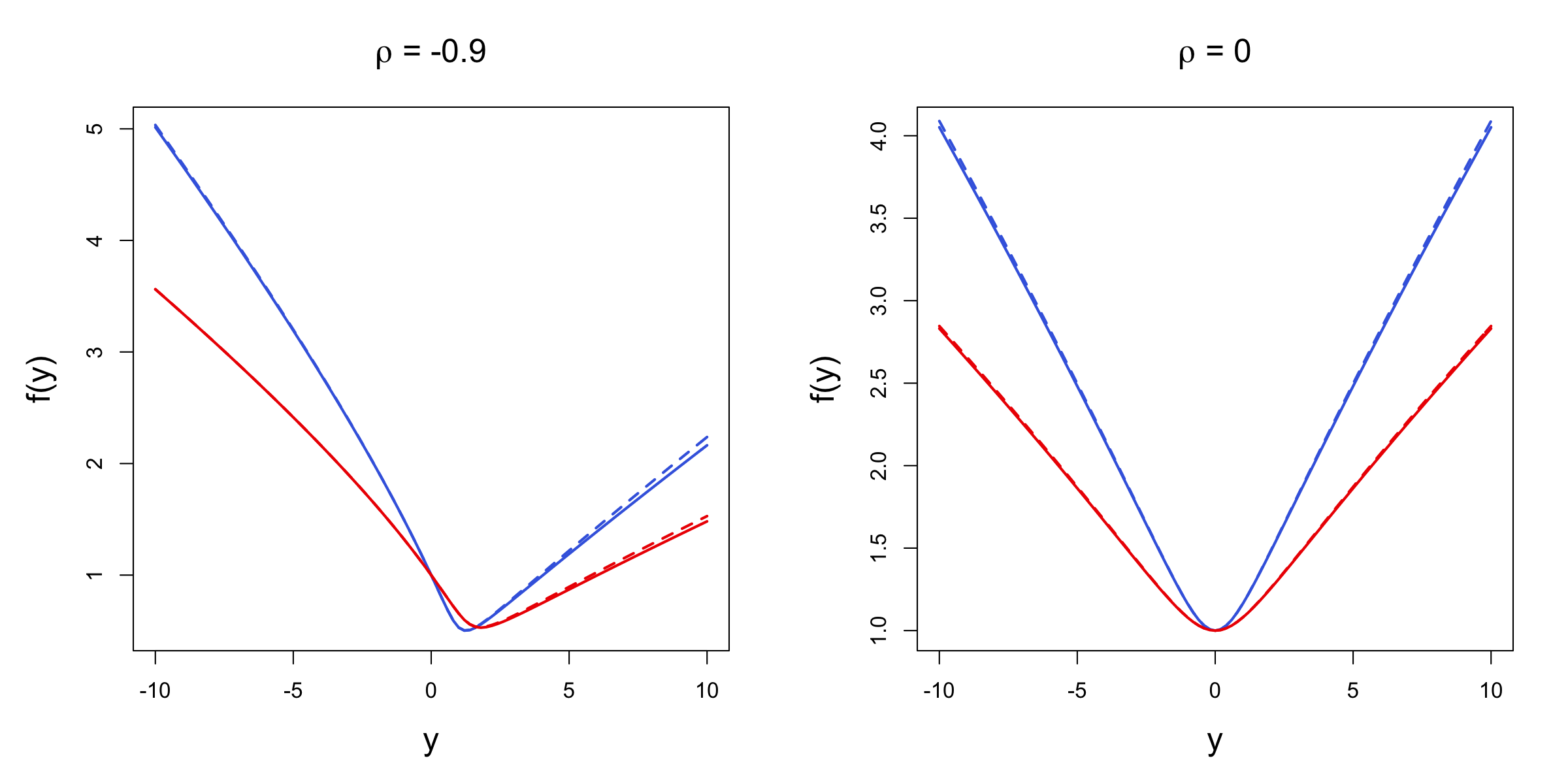}
\caption{The function $f$ and its approximation $f_A$ for two values of $\rho$.  $H=0.05$ is in blue, $H=0.25$ is in red; solid line is the numerical solution $f$ and dashed, the approximation $f_A(y)$. }\label{fig:51}
\end{figure}

\section{Final formula with Monte-Carlo comparison}
Let $\tau=T-t$ as before and
\[
y(x,\tau) = \frac{\kappa(\tau)\,x}{U(\tau)}, \ \ 
k = \log \frac{K}{S},
\ \ k_\beta = \int_S^K \frac{\mathrm{d}s}{\beta(s)},
\ \ 
U_t(\tau) = \sqrt{\frac{1}{\tau}\int_t^{t + \tau} \xi_t(s) \df s}. 
\]
Then \eqref{gen} and \eqref{eq:GA} suggest the following approximate {\em rough SABR
formula} for the Black-Scholes implied volatility $\Sigma(k,\tau)$ of an option with time to expiration
$\tau$ and log-strike $k$:

   \beq
   \boxed{
 \Sigma(k,\tau) =
   \Sigma(0,\tau)\,\frac{|y(k,\tau)|}{\sqrt{G_A(y(k_\beta,\tau))}}}
\label{eq:rSABRformula}
\eeq
under the rough SABR model \eqref{eq:roughSABR}.

 In order to confirm the accuracy of \eqref{eq:rSABRformula} in
 the lognormal case $\beta(s) = s$, we simulate the rough Bergomi model \cite{bayer2016pricing}
 with kernel $\kappa(\tau) = \eta\sqrt{2H}\,\tau^{H-1/2}$ using the
 hybrid scheme
 \cite{bennedsen2017hybrid,fukasawa2021refinement}\footnote{Implementation
 of the refinement of \cite{fukasawa2021refinement} seems to make a
 significant difference in resolving the smile for out-of-the-money
 calls. Here we set the parameters $\kappa = 2$ and $\kappa^\prime = n$ for the scheme of 
\cite{fukasawa2021refinement}.} with $1$ million paths, $2^{10}=1,024$ time steps for $H \in \{0.10,0.20\}$ and $2^{13}=8,192$ time steps for the case $H=0.05$.  

We take $\eta = 1$ and a flat forward variance curve, $\xi_0(s) = 0.04$.  In Figures~\ref{fig:H05}, \ref{fig:H10}, and \ref{fig:H20}, for $H \in \{0.05,0.1,0.2\}$, we plot smiles for $\rho \in \{-0.9,-0.6,0,0.6\}$.  Specifically, each subplot has the graph of
\begin{equation*}
\left(y(k,\tau),  \frac{\Sigma(k,\tau)}{\Sigma(0,\tau)}\right)=\left(\frac{\kappa(\tau)k}{0.2},  \frac{\Sigma(k,\tau)}{\Sigma(0,\tau)}\right),
\end{equation*}
In each case, the dashed red curve (``rSABR'') is the function $f$
obtained by solving the ODE \eqref{odef2} numerically.  From these plots,
we first notice that the scaling of $y(k,\tau)$ works remarkably well to
offset the maturity dependence of the normalized smile. Also, note in particular that the quality of our approximation decreases as
$H$ decreases.

\begin{figure}[p]
\includegraphics[width=\linewidth]{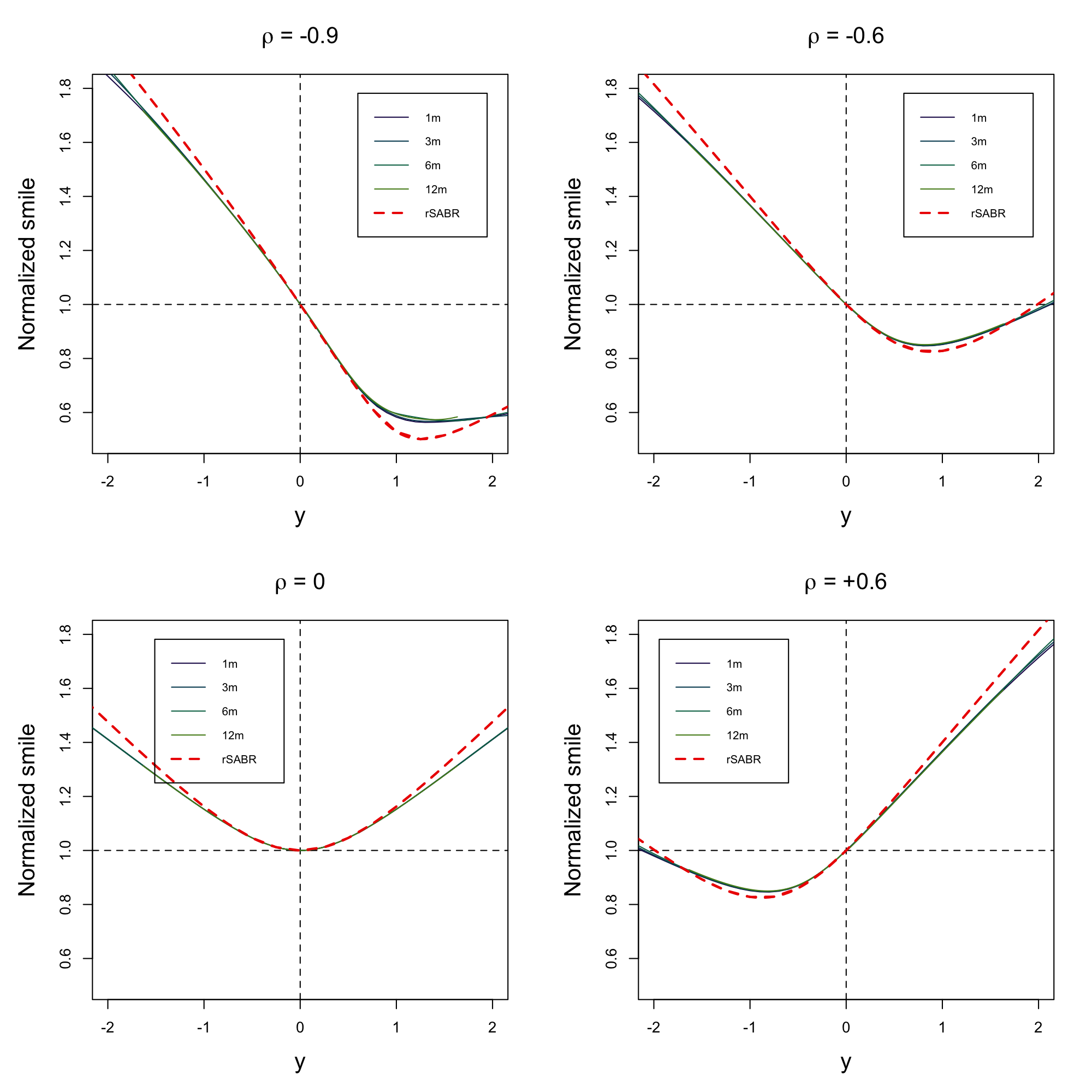}
\caption{With $\beta(s)=s$ and parameters $H=0.05$, $\eta=1$, the dashed red line is the numerical solution $f$;  Monte Carlo estimates of normalized implied volatility $\Sigma(k,\tau)/{\Sigma(0,\tau)}$ for $\tau = 1,\,3,\,6$, and $12$ months are as in the legend.}\label{fig:H05}
\end{figure}

\begin{figure}[p]
\includegraphics[width=\linewidth]{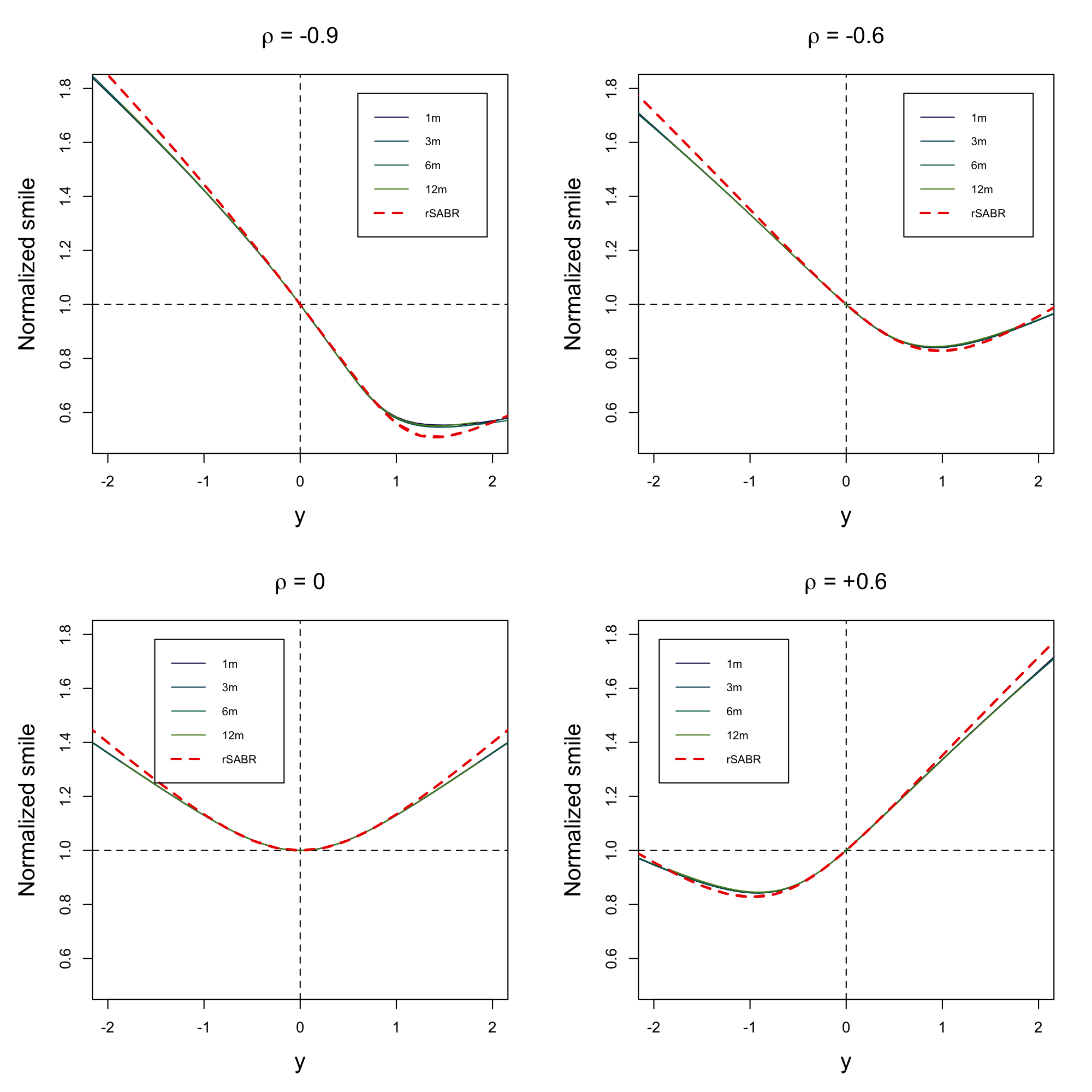}
\caption{With $\beta(s)=s$ and parameters $H=0.10$, $\eta=1$, the dashed red line is the numerical solution $f$;  Monte Carlo estimates of normalized implied volatility $\Sigma(k,\tau)/{\Sigma(0,\tau)}$ for $\tau = 1,\,3,\,6$, and $12$ months are as in the legend.}\label{fig:H10}
\end{figure}

\begin{figure}[p]
\includegraphics[width=\linewidth]{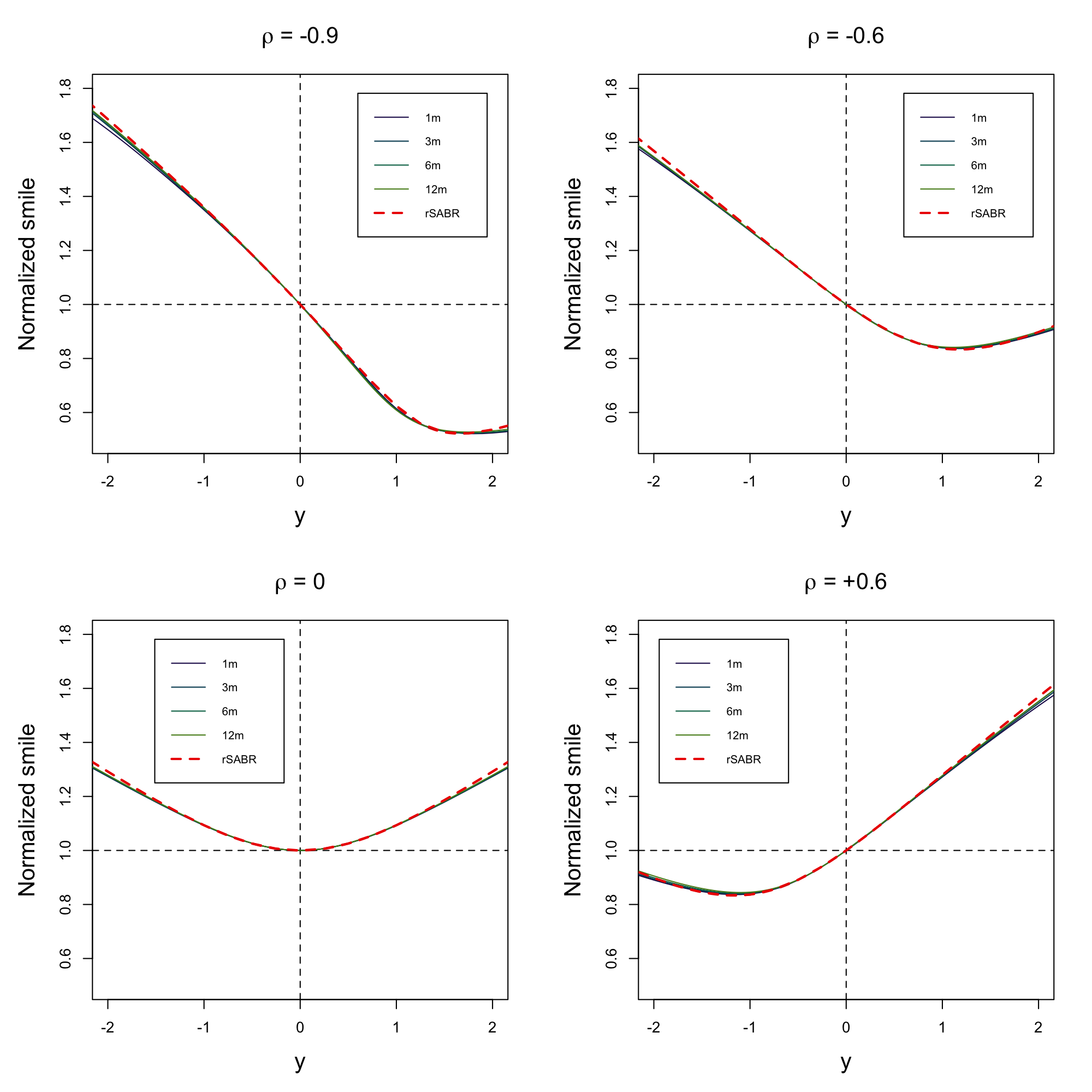}
\caption{With $\beta(s)=s$ and parameters $H=0.20$, $\eta=1$, the dashed red line is the numerical solution $f$;  Monte Carlo estimates of normalized implied volatility $\Sigma(k,\tau)/{\Sigma(0,\tau)}$ for $\tau = 1,\,3,\,6$, and $12$ months are as in the legend.}\label{fig:H20}
\end{figure}

As a further experiment, we simulate the rough SABR model
\eqref{eq:roughSABR} with $\beta(s) = \sqrt{s}$ with $1$ million paths
and $2^{12} = 4,096$ time steps for $H=0.05$ and $\rho=-0.9$.  Again, we
take $\eta = 1$ and a flat forward variance curve, $\xi_0(s) = 0.04$.
The quality of the rough SABR
formula \eqref{eq:rSABRformula}
 is demonstrated in Figure \ref{fig:64}.
Here we plot against $k$ on the x-axis, rather than $y(k,\tau)$.

Through these numerical experiments, we observe that for small
values of $H$ such as $H=0.05$ with $\eta = 1$, we need as
many as $2^{12}$ time steps to achieve convergence. A computation with so many time steps is obviously extremely time-consuming, as if to emphasize to us the value of analytical approximations.

\begin{figure}[tbh!]
\includegraphics[width=\linewidth]{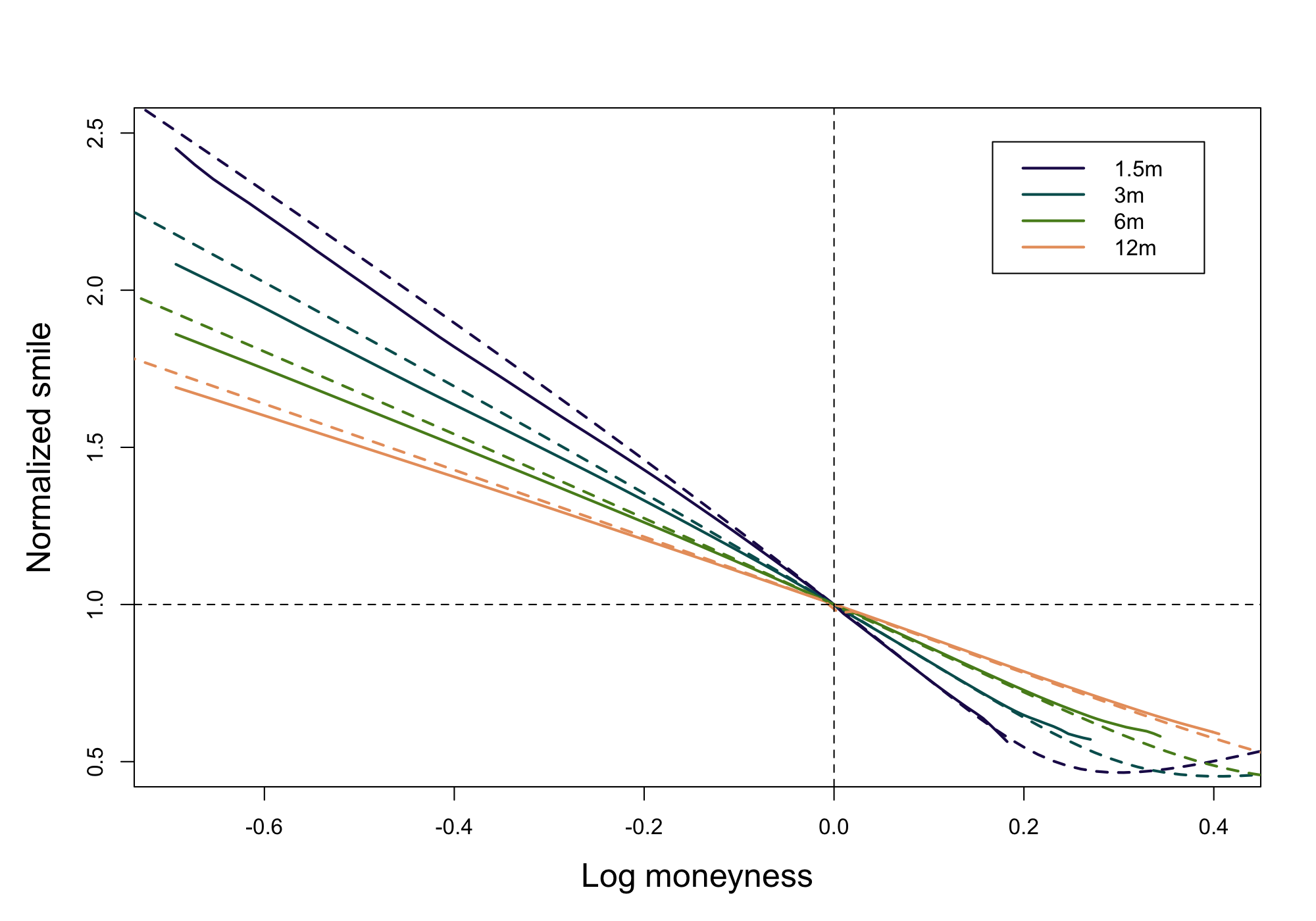}
\caption{With $\beta(s)=\sqrt{s}$ and parameters $H=0.05$, $\eta=1$, Monte Carlo estimates of normalized implied volatility $\Sigma(k,\tau)/{\Sigma(0,\tau)}$ for $\tau = 1.5,\,3,\,6$, and $12$ months are as in the legend.  Dashed lines are corresponding plots of the rough SABR formula \eqref{eq:rSABRformula}.}\label{fig:64}
\end{figure}

We emphasize that our rough SABR formula \eqref{eq:rSABRformula} is a
short-dated approximation, as is the original SABR formula of
\cite{hagan2002managing}.  As with the classical SABR formula, we can
only expect the approximation to work well for values of the
dimensionless expansion parameter $\eta\,\tau^{H} <1$ and indeed this is
what we find in our extensive numerical experiments.   For more extreme rough Bergomi parameter sets such as the pair $\eta = 2.3,\, H=0.05$ arising from calibration of the rough Bergomi model to the SPX surface \cite{bayer2016pricing}, the approximate formula \eqref{eq:rSABRformula} does not match the Monte Carlo smile sufficiently well for it to be useful in model calibration.

\section{Summary and conclusions}

Following an approach originally suggested by Balland
\cite{balland2006forward} in the context of the classical SABR model, we
derived an ODE that is satisfied by the rough SABR normalized volatility
smile for short maturities.  We solved this ODE numerically and further derived a very accurate approximation to the numerical solution.  The resulting analytical formula coincides with the classical one in the case $H = \tfrac12$.  Numerical simulation of rough Bergomi smiles confirms that our small time approximation works well for values of the expansion parameter $\eta\,\tau^{H} <1$.

In FX and interest rate applications, parameters of the classical SABR
formula are allowed to depend on time to expiration; effectively a
different model for each expiration.  In our setup, effective classical
SABR parameters also depend on time to expiration, according to the
chosen value of $H$.  Our conjecture is that the time dependence of
market-implied classical SABR parameters can be parameterized by $H$ -- confirmation of this is left for further research.  If so, given that our rough SABR formula is hardly more complicated than the classical SABR formula, it has the potential to be widely adopted by practitioners.

\bibliographystyle{alpha}
\bibliography{RoughVolatility}

\appendix

\section{Proof of Theorem~\ref{main}}\label{sec:31proof}
\begin{proof}
Denote $\tau = T-t$ and $k = \log K/S$.
By It\^o's formula, we have
\begin{equation*}
\frac{\mathrm{d}U}{U} =
 \frac{1}{2}\kappa(\tau)\,R\,\mathrm{d}W + 
\frac{1}{2}\left(
\frac{1}{\tau}\left(1 -
	    \frac{\alpha^2}{U^2}\right)-\frac{1}{4}\kappa(\tau)^2 R^2
\right) \mathrm{d}t,
\end{equation*}
where
\begin{equation*}
 R_t = \frac{\int_t^T\kappa(s-t)\xi_t(s) \mathrm{d}s}{\kappa(T-t)\int_t^T\, \xi_t(s) \mathrm{d}s}.
\end{equation*}
Further, we have
\begin{equation*}
\begin{split}
 \mathrm{d}Y &= -\frac{\kappa^\prime(\tau)}{\kappa(\tau) } Y\mathrm{d}t
+ \frac{\kappa(\tau)}{U}\mathrm{d} k 
+ \kappa(\tau)k \mathrm{d} \frac{1}{U} + \kappa(\tau)\mathrm{d}\langle
 \frac{1}{U},k \rangle
\\
& = 
\frac{Y}{2\tau} (1-2H)\mathrm{d}t + \frac{\kappa(\tau)}{U}\mathrm{d} k
- Y \frac{\mathrm{d}U}{U} + \frac{Y}{U^2} \mathrm{d}\langle U \rangle
-\frac{\kappa(\tau)}{U^2}\langle U, k\rangle
\\
&= \frac{Y}{2\tau}\left( \frac{\alpha^2}{U^2}- 2H  \right) \mathrm{d}t
+ \frac{\kappa(\tau)^2}{2}\left(
\frac{3}{4}YR^2 + \frac{\alpha}{U}R\rho
\right)\mathrm{d}t
+ \frac{\kappa(\tau)}{U}\mathrm{d} k - \frac{Y}{2}\kappa(\tau)R
 \mathrm{d}W \\
&= - \frac{Y}{2} \kappa(\tau)R \mathrm{d}W
- \kappa(\tau) \frac{\alpha }{U}\mathrm{d}Z + \text{drift},
\end{split}
\end{equation*}
and so,
\begin{equation*}
 \begin{split}
  \mathrm{d}\log \hat\Sigma &= \frac{1}{2}\kappa(\tau)R\mathrm{d}W +
  \frac{f^\prime(Y)}{f(Y)}\mathrm{d}Y + \text{drift},\\
&=\frac{1}{2}\kappa(\tau)R\left(1-\frac{Y f^\prime(Y)}{f(Y)}\right)\mathrm{d}W
-  \kappa(\tau)  \frac{f^\prime(Y)}{f(Y)}
\frac{\alpha}{U} \mathrm{d}Z+ \text{drift}.
 \end{split}
\end{equation*}
This implies 
\begin{equation*}
\begin{split}
&\frac{\mathrm{d}}{\mathrm{d}t} \langle \log \hat\Sigma \rangle
\\ &= \kappa(\tau)^2\left\{
\frac{R^2}{4}\left(1-\frac{Yf^\prime(Y)}{f(Y)}\right)^2
 + 
\left(\frac{f^\prime(Y)}{f(Y)}
\frac{\alpha }{U}\right)^2  - R
\left(1-\frac{Y f^\prime(Y)}{f(Y)}\right)\frac{f^\prime(Y)}{f(Y)}
\frac{\rho \alpha}{U}\right\}
\end{split}
\end{equation*}
and
\begin{equation*}
 \frac{\mathrm{d}}{\mathrm{d}t} \langle \log S, \log \hat\Sigma \rangle
=
\frac{\rho\alpha}{2}\kappa(\tau)R_t\left(1-\frac{Y f^\prime(Y)}{f(Y)}\right) 
- \frac{f^\prime(Y)}{f(Y)}
\frac{\alpha^2 \kappa(\tau)}{U}.
\end{equation*}
Therefore,
\begin{equation*}
 \begin{split}
&   \frac{\mathrm{d}}{\mathrm{d}t}
 \langle \log S \rangle
+ 2 k
\frac{\mathrm{d}}{\mathrm{d}t}
\langle \log S, \log \hat\Sigma \rangle
+ k^2
\frac{\mathrm{d}}{\mathrm{d}t}
\langle \log \hat\Sigma \rangle 
\\ =  & \alpha^2 + 2k \kappa(\tau)
\left\{\frac{\rho\alpha}{2}R\left(1-\frac{Yf^\prime(Y)}{f(Y)}\right) 
- \frac{f^\prime(Y)}{f(Y)}
\frac{\alpha^2}{U}\right\} \\
& +k^2\kappa(\tau)^2
\left\{
\frac{R^2}{4}\left(1-\frac{Yf^\prime(Y)}{f(Y)}\right)^2
 + 
\left(\frac{f^\prime(Y)}{f(Y)}
\frac{\alpha}{U}\right)^2  -
R\left(1-\frac{Yf^\prime(Y)}{f(Y)}\right)\frac{f^\prime(Y)}{f(Y)}
\frac{\rho \alpha}{U}
\right\} \\
= &U^2 \Biggl\{
\frac{\alpha^2}{U^2} + 2Y \frac{\alpha}{U}\left\{\frac{\rho
  R}{2}\left(1-\frac{Yf^\prime(Y)}{f(Y)}\right) 
- \frac{f^\prime(Y)}{f(Y)} \frac{\alpha}{U}
\right\} \\
& \hspace*{1cm}
+ Y^2 \left\{
\frac{R^2}{4}\left(1-\frac{Yf^\prime(Y)}{f(Y)}\right)^2
 + 
\left(\frac{f^\prime(Y)}{f(Y)}  \frac{\alpha}{U}
\right)^2  -
\left(1-\frac{Yf^\prime(Y)}{f(Y)}\right)\frac{f^\prime(Y)}{f(Y)}
  \frac{\alpha}{U}\rho R
\right\}
\Biggr\} \\
= & 
U^2\Biggl\{
\left( \frac{\alpha}{U} +
Y \left\{\frac{\rho R}{2}\left(1-\frac{Yf^\prime(Y)}{f(Y)}\right) 
- \frac{f^\prime(Y)}{f(Y)} \frac{\alpha}{U}
\right\}
\right)^2 
+ (1-\rho^2)\frac{R^2}{4}
Y^2\left(1-\frac{Yf^\prime(Y)}{f(Y)}\right)^2
\Biggr\}
\\
=& 
U^2
\left(1 -\frac{Yf^\prime(Y)}{f(Y)}\right)^2
\left\{
\frac{\alpha^2}{U^2} + \rho R \frac{\alpha}{U} Y + \frac{R^2}{4}Y^2
\right\}.
 \end{split}
\end{equation*}
Now we consider the second line of (\ref{ana}).
From the above computations, we see that the last two terms are of
 $O(\tau \kappa(\tau)) = O(\tau^{H+1/2})$. For the second term,
since
\begin{equation*}
 \mathrm{d}\hat\Sigma = 
f(Y)\mathrm{d}U + Uf^\prime(Y)\mathrm{d}Y + 
\frac{1}{2}Uf^{\prime\prime}(Y)\mathrm{d}\langle Y \rangle + 
f^\prime(Y)\mathrm{d}\langle U, Y \rangle,
\end{equation*}
we have
\begin{equation*}
 \begin{split}
  \hat{D} = & Uf(Y)\left(
\frac{1}{2\tau}\left(1- \frac{\alpha^2}{U^2}B\right) - \frac{1}{8}\kappa(\tau)^2 R^2
\right) \\
&+ Uf^\prime(Y)\left(
\frac{Y}{2\tau}\left( \frac{\alpha^2}{U^2}- 2H  \right) 
+ \frac{\kappa(\tau)^2}{2}\left(
\frac{3}{4}YR^2 + \frac{\alpha}{U}R\rho
\right)
 \right)\\
& + \frac{1}{2}Uf^{\prime\prime}(Y)\kappa(\tau)^2\left(
\frac{\alpha^2}{U^2} + \frac{\alpha}{U} 
YR \rho + \frac{Y^2R^2}{4}\right)
 - Uf^\prime(Y)\kappa(\tau)^2\left(
\frac{YR^2}{4} + \frac{\alpha}{U} \frac{R\rho}{2}
\right) \\
= & \frac{1}{2\tau}Uf(Y)
\left(
1 - \frac{\alpha^2}{U^2} + \frac{Yf^\prime(Y)}{f(Y)}\left(\frac{\alpha^2}{U^2}-2H\right)
\right) \\
& + \frac{1}{2}U\kappa(\tau)^2\left(- f(Y)
\frac{R^2}{4} \left(1 - \frac{Yf^\prime(Y)}{f(Y)}\right) +
  f^{\prime \prime}(Y)
\left(
\frac{\alpha^2}{U^2} + \rho R \frac{\alpha}{U} Y + \frac{R^2}{4}Y^2
\right)
\right).
 \end{split}
\end{equation*}
Therefore, we have a nonnegligible term
\begin{equation*}
2\hat\Sigma \tau \hat{D} \approx 
U^2f(Y)^2\left(\left(1-\frac{Yf^\prime(Y)}{f(Y)}\right)
 \left(1-\frac{\alpha^2}{U^2}\right) +(1- 2H) \frac{Yf^\prime(Y)}{f(Y)}\right).
\end{equation*}
Now, using that $f$ is a solution of (\ref{odef2}),
\begin{equation*}
\begin{split}
 &   \frac{\mathrm{d}}{\mathrm{d}t}
 \langle \log S \rangle
+ 2 k
\frac{\mathrm{d}}{\mathrm{d}t}
\langle \log S, \log \hat\Sigma \rangle
+ k^2
\frac{\mathrm{d}}{\mathrm{d}t}
\langle \log \hat\Sigma \rangle
\\ &- \hat\Sigma^2 + 2\hat\Sigma\tau \hat{D}
+\hat\Sigma^2\tau
\frac{\mathrm{d}}{\mathrm{d}t}
\langle \log S, \log \hat\Sigma \rangle
- \frac{\hat\Sigma^4\tau^2}{4}\frac{\mathrm{d}}{\mathrm{d}t}
\langle \log \hat\Sigma \rangle \\
& \approx
U^2\left(1-\frac{Yf^\prime(Y)}{f(Y)}\right)^2
\Biggl(
-1+\frac{\alpha^2}{U^2} +
\rho Y\left(\frac{\alpha}{U}R - \frac{1}{H+1/2}\right)
+ \frac{Y^2}{4} \left(R^2- \frac{1}{(H+1/2)^2}\right)\Biggr)
 \\ & \hspace*{1cm} + U^2f(Y)^2\left(1-\frac{Yf^\prime(Y)}{f(Y)}\right)
 \left(1-\frac{\alpha^2}{U^2}\right).
\end{split}
\end{equation*}
The result then follows from the lemma below, by taking $\Psi = 1/U$ so
 that $\Psi\hat{\Sigma} = f(Y)$.
\end{proof}

\begin{lem} As $\tau = T -t \to 0$,
\begin{equation*}
\frac{\alpha_t}{U_t}  \to 1, \ \  R_t \to \frac{1}{H+1/2}
\end{equation*}
in probability.
\end{lem}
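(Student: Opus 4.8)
The plan is to reduce both limits to the single statement that, as $\tau = T-t \to 0$, every weighted average of the forward-variance curve $s \mapsto \xi_t(s)$ over $[t,T]$ collapses to its left endpoint value $\xi_t(t) = \alpha_t^2$. Introduce the $\kappa$-weighted and uniform averages
\[
A^\kappa_t = \frac{\int_t^T \kappa(s-t)\,\xi_t(s)\,\mathrm{d}s}{\int_t^T \kappa(s-t)\,\mathrm{d}s}, \qquad A^1_t = \frac{1}{\tau}\int_t^T \xi_t(s)\,\mathrm{d}s = U_t^2 .
\]
Since $\kappa(r)=\eta\sqrt{2H}\,r^{H-1/2}$, the purely deterministic ratio
\[
\frac{\int_t^T \kappa(s-t)\,\mathrm{d}s}{\kappa(\tau)\,\tau} = \frac{\int_0^\tau r^{H-1/2}\,\mathrm{d}r}{\tau \cdot \tau^{H-1/2}} = \frac{1}{H+\tfrac12}
\]
holds exactly for every $\tau>0$. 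Consequently $R_t = \tfrac{1}{H+1/2}\,A^\kappa_t/A^1_t$ while $\alpha_t^2/U_t^2 = \xi_t(t)/A^1_t$, so it suffices to prove $A^w_t/\xi_t(t)\to 1$ in probability for each weight $w\in\{1,\kappa\}$: the case $w=1$ gives $\alpha_t/U_t\to 1$, and the two cases together give $R_t\to 1/(H+\tfrac12)$.

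Throughout I would work conditionally on $\mathcal{F}_0$, under which the initial curve $\xi_0$ is a fixed positive continuous (hence bounded, bounded-below) function and $W$ is still a Brownian motion. From \eqref{exex}, conditionally on $\mathcal{F}_0$ the variable $\log\xi_t(s)$ is Gaussian; writing $M_t(s)=\eta\sqrt{2H}\int_0^t(s-u)^{H-1/2}\,\mathrm{d}W_u$ and $V_t(s)=\tfrac12\eta^2(s^{2H}-(s-t)^{2H})$, one has $\operatorname{Var}M_t(s)=2V_t(s)$ and $C_t(s):=\operatorname{Cov}(M_t(s),M_t(t))=\eta^2 2H\int_0^t(s-u)^{H-1/2}(t-u)^{H-1/2}\,\mathrm{d}u$, so a standard lognormal-moment computation gives
\[
\E\bigl[(\xi_t(s)-\xi_t(t))^2 \mid \mathcal{F}_0\bigr] = \xi_0(s)^2 e^{2V_t(s)} - 2\,\xi_0(s)\,\xi_0(t)\,e^{C_t(s)} + \xi_0(t)^2 e^{2V_t(t)} .
\]
Because $C_t(t)=\operatorname{Var}M_t(t)=2V_t(t)$, this vanishes at $s=t$; to obtain it uniformly over $s\in[t,T]$ I would use the single Hölder-type kernel bound
\[
\int_0^t \bigl[(s-u)^{H-1/2}-(t-u)^{H-1/2}\bigr]^2\,\mathrm{d}u \le C\,(s-t)^{2H} \le C\,\tau^{2H},
\]
which via Cauchy--Schwarz controls $|C_t(s)-C_t(t)|$ and, directly, $|V_t(s)-V_t(t)|\le \eta^2\tau^{2H}$, together with the a.s. uniform continuity of $\xi_0$ on $[0,T]$. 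This yields $\sup_{s\in[t,T]}\E[(\xi_t(s)-\xi_t(t))^2\mid\mathcal{F}_0]\to 0$ a.s., whence by conditional Jensen and Fubini $\E[|A^w_t-\xi_t(t)|\mid\mathcal{F}_0]\le \sup_{s\in[t,T]}\E[|\xi_t(s)-\xi_t(t)|\mid\mathcal{F}_0]\to 0$ a.s.\ for both weights. The point worth stressing is that, since we only ever integrate against the weights, pointwise-in-$s$ bounds suffice and no maximal inequality for the Gaussian field $s\mapsto M_t(s)$ is required.

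To pass from differences to ratios I would note that the denominator $\xi_t(t)=\alpha_t^2$ is a.s.\ positive and converges in probability to $\xi_T(T)=\xi_0(T)\exp\{M_T(T)-\tfrac12\eta^2 T^{2H}\}>0$; the only nontrivial ingredient, $M_t(t)\to M_T(T)$ in conditional $L^2$, follows from the same kernel estimate on $[0,t]$ together with $\int_t^T(T-u)^{2H-1}\,\mathrm{d}u=(T-t)^{2H}/(2H)\to 0$. Thus $\xi_t(t)$ is bounded away from $0$ in conditional probability, so $A^w_t-\xi_t(t)\to 0$ upgrades to $A^w_t/\xi_t(t)\to 1$ in conditional probability a.s.; since conditional probabilities are bounded by $1$, dominated convergence turns this into the required unconditional convergence in probability. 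The step I expect to be the main obstacle is exactly the uniform-in-$s$ control of the conditional $L^2$ distance, where one must simultaneously absorb the singularity of $(s-u)^{H-1/2}$ at $u=t$ when $H<\tfrac12$ and the randomness of $\xi_0$; the reduction to integral averages is what makes this tractable, replacing a pathwise modulus-of-continuity estimate by the elementary $O(\tau^{2H})$ kernel bound above.
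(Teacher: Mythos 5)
Your proof is correct, and at its core it rests on the same two ingredients as the paper's: the explicit lognormal representation \eqref{exex} and the estimate that the kernel increment $\int_0^t[\kappa(s-u)-\kappa(t-u)]^2\,\mathrm{d}u$ vanishes as $s-t\to 0$ (your scaling bound $C_H\,(s-t)^{2H}$ is exactly the quantity the paper computes and sends to zero). The packaging differs in two ways that are worth noting. First, the paper works with the ratio $\xi_t(s)/\xi_t(t)$, shows its logarithm tends to $0$ in probability for $s\in[t,T]$, and then passes the limit directly under the integrals $\int_0^1 \xi_t(t+\tau\theta)/\xi_t(t)\,\mathrm{d}\theta$ and $\int_0^1\theta^{H-1/2}\xi_t(t+\tau\theta)\,\mathrm{d}\theta\big/\int_0^1\xi_t(t+\tau\theta)\,\mathrm{d}\theta$; the interchange of limit and integral is not justified there. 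Your version supplies precisely the missing uniformity: by computing conditional second moments of the \emph{difference} $\xi_t(s)-\xi_t(t)$ given $\mathcal{F}_0$, you get a bound that is uniform in $s\in[t,T]$, which via Jensen/Fubini legitimately controls the weighted averages $A^w_t$; the price is the extra step of dividing by $\xi_t(t)$ at the end, for which you correctly establish that $\xi_t(t)\to\xi_T(T)>0$, and the final deconditionalization by dominated convergence. Second, your exact identity $\int_t^T\kappa(s-t)\,\mathrm{d}s=\kappa(\tau)\tau/(H+\tfrac12)$ cleanly separates the deterministic normalization $1/(H+\tfrac12)$ from the stochastic ratio $A^\kappa_t/A^1_t$, whereas the paper obtains the same constant as the limit $\int_0^1\theta^{H-1/2}\,\mathrm{d}\theta$ after the substitution $s=t+\tau\theta$. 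In short: same route, but your write-up is the more rigorous of the two, at the cost of a longer moment computation.
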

\begin{proof}
By the explicit expression (\ref{exex}), we have
\begin{equation*}
\log  \frac{\xi_t(s)}{\xi_t(t)}  = \log \frac{\xi_0(s)}{\xi_0(t)} + 
\int_0^t \left[\kappa(s-u)-\kappa(t-u)\right]\mathrm{d}W_u
-\frac{1}{2}\int_0^t
\left[\kappa(s-u)^2-\kappa(t-u)^2\right]\mathrm{d}u
\end{equation*}
for $t \leq s \leq T$. Now, as $\tau = T-t \to 0$,
\begin{equation*}
 \int_0^t \left[\kappa(s-u)^2-\kappa(t-u)^2\right]\mathrm{d}u
= \eta^2\left[ s^{2H} - (s-t)^{2H} - t^{2H} \right]\to 0,
\end{equation*}
and
\beas
&& \int_0^t \left[\kappa(s-u)-\kappa(t-u)\right]^2\mathrm{d}u
\\&=& \eta^2\left[s^{2H} - (s-t)^{2H} + t^{2H} 
-4H\int_0^t (s-u)^{H-1/2}(t-u)^{H-1/2} \mathrm{d}u
)\right]
\\&=& \eta^2\left[s^{2H} - (s-t)^{2H} + t^{2H} 
-4H\,t^{2H}\int_0^1 (s/t-1 +x)^{H-1/2}x^{H-1/2} \mathrm{d}x
)\right] \to 0.
\eeas
Also, the initial forward variance curve is continuous by assumption, so $\xi_t(s)/\xi_t(t) \to 1$. Therefore,
\begin{equation*}
\frac{U_t^2}{\alpha_t^2} = \int_0^1
 \frac{\xi_t(t + \tau \theta)}{\xi_t(t)}\mathrm{d}\theta \to 1,
\end{equation*}
and
\begin{equation*}
 R_t = \frac{\int_0^1\theta^{H-1/2} \,\xi_t(t + \tau \theta)\,
  \mathrm{d}\theta}{\int_0^1 \xi_t(t + \tau \theta)\, \mathrm{d}\theta} \to
\int_0^1\theta^{H-1/2}\mathrm{d}\theta = \frac{1}{H+1/2}.
\end{equation*}
\end{proof}


\end{document}